\newtheorem{theorem}{Theorem}[section]
\newtheorem{lemma}[theorem]{Lemma}
\newtheorem{corollary}[theorem]{Corollary}
\theoremstyle{definition}
\newtheorem{definition}[theorem]{Definition}
\newtheorem{remark}[theorem]{Remark}
\newcommand{\ignore}[1]{}
\newcommand{\hcm}[1][1]{\hspace*{#1 cm}}
\newcommand{\rb}[2]{\raisebox{#1 mm}[0mm][0mm]{#2}}
\newcommand{\istrut}[2][0]{\rule[- #1 mm]{0mm}{#1 mm}\rule{0mm}{#2 mm}}
\newcommand{\zero}[1]{\makebox[0mm][l]{$#1$}}
\newcommand{\ceil}[1]{\lceil #1 \rceil}
\newcommand{\floor}[1]{\lfloor #1 \rfloor}
\newcommand{\bydef}{\stackrel{\operatorname{def}}{=}}
\newcommand{\poly}{\operatorname{poly}}
\newcommand{\Furedi}{F\"{u}redi}
\newcommand{\Bona}{B\'{o}na}
\newcommand{\Ex}{\operatorname{Ex}}
\newcommand{\compose}{\operatorname{\circ}}
\newcommand{\lshuffle}{\varolessthan}
\newcommand{\rshuffle}{\varogreaterthan}
\newcommand{\bl}[1]{\llbracket #1 \rrbracket}
\newcommand{\bu}{\bullet}
\newcommand{\Kron}{\otimes}
\newcommand{\oneone}{\mbox{$\left({\tiny\begin{array}{@{}c@{}}
\bu\\
\bu\end{array}}\right)$}}
\newcommand{\hatpattern}{\scalebox{.7}{\mbox{$\left({\tiny\begin{array}{@{}c@{}c@{}c@{}}
&\bu&\\
\bu&&\bu\end{array}}\right)$}}}
\newcommand{\ababa}{\scalebox{.7}{\mbox{$\left({\tiny\begin{array}{@{}c@{}c@{}c@{}c@{}}
\bu&&&\\
&&\bu&\\
&\bu&&\bu\end{array}}\right)$}}}
\newcommand{\ababareflect}{\scalebox{.7}{\mbox{$\left({\tiny\begin{array}{@{}c@{}c@{}c@{}c@{}}
&&&\bu\\
&\bu&\\
\bu&&\bu&\end{array}}\right)$}}}
\newcommand{\threeonetwo}{\scalebox{.7}{\mbox{$\left({\tiny\begin{array}{ccc}
\bu&&\\
&&\bu\\
&\bu&\end{array}}\right)$}}}
\newcommand{\twoonethree}{\scalebox{.7}{\mbox{$\left({\tiny\begin{array}{ccc}
&&\bu\\
\bu&&\\
&\bu&\end{array}}\right)$}}}
\newcommand{\patternA}{\scalebox{.7}{$\left({\tiny\begin{array}{@{}c@{}c@{}c@{}}
&&\bu\\
&\bu&\\
\bu&&\end{array}}\right)$}}
\newcommand{\patternB}{\scalebox{.7}{$\left({\tiny\begin{array}{@{}c@{}c@{}c@{}}
\bu&&\\
&&\bu\\
&\bu&\end{array}}\right)$}}
\newcommand{\Ubot}{U_{\operatorname{bot}}}
\newcommand{\Utop}{U_{\operatorname{top}}}
\newcommand{\Umid}{U_{\operatorname{mid}}}
\newcommand{\Usub}{U_{\operatorname{sub}}}
\newcommand{\Sbot}{S_{\operatorname{bot}}}
\newcommand{\livebl}[1]{\llparenthesis\,{#1}\,\rrparenthesis}
\newcommand{\Afirst}{A_{\operatorname{first}}}
\newcommand{\Alast}{A_{\operatorname{last}}}
\newcommand{\Alight}{A_{\operatorname{light}}}
\newcommand{\Aheavy}{A_{\operatorname{heavy}}}
\newcommand{\Amidmid}{A_{\operatorname{lightmid}}}
\newcommand{\contract}{\operatorname{c}}
\title{Sorting Pattern-Avoiding Permutations\\
via 0--1 Matrices Forbidding Product Patterns\thanks{Supported by NSF grants CCF-1815316 and CCF-2221980 and by the European Research Council (ERC) under the European Union's Horizon 2020 research and innovation programme under grant agreement No 759557.}}
\author{
Parinya Chalermsook\\ Aalto University 
\and 
Seth Pettie\\ University of Michigan
\and
Sorrachai Yingchareonthawornchai\\
Aalto University
}
\date{}
\begin{document}

\maketitle

\begin{abstract}
We consider the problem of comparison-sorting an 
$n$-permutation $S$ that \emph{avoids} some $k$-permutation $\pi$. 
Chalermsook, Goswami, Kozma, Mehlhorn, and Saranurak~\cite{ChalermsookGKMS15}
prove that when $S$ is sorted by inserting the elements into the \textsc{GreedyFuture}~\cite{DemaineHIKP09} binary search tree, the running time is linear in the 
\emph{extremal function} $\Ex(P_\pi\Kron \hatpattern,n)$.
This is the maximum number of 1s in an $n\times n$ 0--1 matrix avoiding $P_\pi \Kron \hatpattern$, where $P_\pi$ is the $k\times k$ permutation matrix of $\pi$, 
and $P_{\pi}\Kron \hatpattern$ is the $2k\times 3k$ Kronecker product of
$P_\pi$ and the ``hat'' pattern $\hatpattern$.  The same time bound
can be achieved by sorting $S$ with Kozma and Saranurak's \textsc{SmoothHeap}~\cite{KozmaS20}.

Applying off-the-shelf results on the extremal functions of 0--1 matrices, it was known that
\[
\Ex(P_\pi\Kron \hatpattern,n) = \left\{\begin{array}{l@{\hcm}l}
\Omega(n\alpha(n)), &\\
O\mathopen{}\left(n\cdot 2^{(\alpha(n))^{3k/2-O(1)}}\right),
\end{array}\right.
\]
where $\alpha(n)$ is the inverse-Ackermann function.   In this paper we 
give \emph{nearly tight} upper and lower bounds on the density of 
$P_\pi\Kron\hatpattern$-free matrices
in terms of ``$n$'', and improve the dependence on ``$k$'' from doubly exponential to singly exponential.
\[
\Ex(P_\pi\Kron \hatpattern,n) = \left\{\begin{array}{l@{\hcm}l}
\Omega\mathopen{}\left(n\cdot 2^{\alpha(n)}\right), &     \mbox{for most $\pi$,}\\
O\mathopen{}\left(n\cdot 2^{O(k^2)+(1+o(1))\alpha(n)}\right),  &   \mbox{for all $\pi$.}
\end{array}\right.
\]
As a consequence, sorting $\pi$-free sequences can be performed 
in $O(n2^{(1+o(1))\alpha(n)})$ time.
For many corollaries of the dynamic optimality conjecture, 
the best analysis uses forbidden 0--1 matrix theory.  
Our analysis may be useful in analyzing
other classes of access sequences on binary search trees.
\end{abstract}

\section{Introduction}

The problem of sorting restricted classes of permutations has been studied for decades.  Knuth~\cite{Knuth73} observed that the class of permutations sortable by a stack is precisely the set of $(2,3,1)$-avoiding permutations; see~\cite{Tarjan72,BiedlGHLM10,MansourSS19,HarjuI01,Elvey-PriceG17a,Elvey-PriceG17b,FelsnerP08,AlbertB15,AtkinsonR02} and \Bona's survey~\cite{Bona02}
for models of restricted sorting devices.
In general, an $n$-permutation $S$ \emph{avoids} a 
$k$-permutation $\pi$ if there do not exist indices 
$i_1 < \cdots < i_k$ for which
\[
\forall p,q\in [k].\;  S(i_p) < S(i_q) \Longleftrightarrow \pi(p) < \pi(q).
\]
In this paper we consider the algorithmic 
problem of comparison-sorting a $\pi$-avoiding $S$.

\paragraph{Decision Tree Complexity.}
Fredman~\cite{Fredman76} observed that if $S$ is known to be
selected from a permutation set $\Gamma$, that $S$ can be sorted
with $O(n + \log |\Gamma|)$ comparisons.  
The \emph{Stanley-Wilf conjecture} (see \Bona~\cite{Bona22}) 
states that if $\Gamma_\pi$ is the set of all $\pi$-avoiding
permutations, that $|\Gamma_{\pi}| \leq (c(\pi))^n$, 
for some constant $c(\pi)$.  This conjecture was reduced
to the \emph{\Furedi-Hajnal conjecture}~\cite{FurediH92} by Klazar~\cite{Klazar00}
and both conjectures were proved
by Marcus and Tardos~\cite{MarcusT04}. 
Together with Fredman~\cite{Fredman76},
this 
implies that the decision-tree complexity of sorting $S$ is 
$O(n\log c(\pi))=O_k(n)$.
Subsequent work has attempted to pin down the leading constant~\cite{Klazar00,MarcusT04,Cibulka09,Fox13,CibulkaK17}.
Fox~\cite{Fox13} proved that\footnote{The manuscript~\cite{Fox13} only gives an $\Omega(k^{1/4}n)$ lower bound on the 
decision tree complexity of sorting a $\pi$-free $S$.  The $\Omega(k^{1/2}n)$ lower bound is unpublished.}
\[
n\log c(\pi) = \left\{\begin{array}{l@{\hcm[.5]}l} 
O(kn)               & \mbox{For all $k$-permutations $\pi$,}\\
\Omega(k^{1/2}n)    & \mbox{For some $k$-permutation $\pi$,}\\
\Omega((k/\log k)^{1/2}n) & \mbox{For almost all $k$-permutations $\pi$.}
\end{array}\right.
\]

\paragraph{Algorithmic Complexity.}
There are two natural ways to approach the \emph{algorithmic} complexity
of sorting a $\pi$-free $S$.  The first is to use knowledge of $\pi$
to structure the sorting process.  
This approach is sufficient to sort optimally in $O(n)$ time when $k=3$~\cite{Knuth73,Arthur07}, 
and has had limited success for some patterns with $k=4$.
Arthur~\cite{Arthur07} gave $O(n)$-time sorting algorithms when
$\pi \in  \{(1,2,3,4), (1,2,4,3), (2,1,4,3)\}$, and 
$O(n\log\log\log n)$-time
sorting algorithms when $\pi \in \{(1,3,2,4),(1,3,4,2),(1,4,2,3),(1,4,3,2)\}$.
The \emph{oblivious} approach to sorting $S$ is to 
simply use a general-purpose sorting algorithm, but analyze its
behavior when $S$ happens to be $\pi$-free.
This is the approach taken by Chalermsook, Goswami, Kozma, Mehlhorn, and Saranurak~\cite{ChalermsookGKMS15}, Kozma and Saranurak~\cite{KozmaS20}, 
and by our paper.
Consider these two general-purpose sorting algorithms:
\begin{description}
\item[\textsf{BST Sort}.] Fix some dynamic binary search tree (BST) algorithm $\mathcal{T}$.  Beginning from an empty BST, 
insert the elements $S(1),\ldots,S(n)$ in that order, reorganizing
the tree between inserts as $\mathcal{T}$ dictates.
The number of comparisons is the sum of depths of $(S(i))_{1\leq i\leq n}$ at the time of their insertion; 
the \emph{time} is linear in the number of comparisons and that needed to reorganize the tree via rotations.

\item[\textsf{Heap Sort}.] Fix some heap data structure $\mathcal{H}$.
Insert the elements $S(1),\ldots,S(n)$ into the heap in that order,
then perform $n$ \textsf{Delete-Min} operations, thereby sorting the sequence.
\end{description}

Chalermsook et al.~\cite{ChalermsookGKMS15} 
analyzed the performance of 
\textsf{BST Sort} when $\mathcal{T}$ is
\textsc{GreedyFuture}~\cite{DemaineHIKP09}, 
an online BST that is $O(1)$-competitive
with the natural offline \textsc{Greedy} algorithm~\cite{Lucas88,Munro00}.
Define $A_S$ to be the $n\times n$ 0--1 permutation matrix where $A_S(i,S(i))=1$.
If $S$ avoids a $k$-permutation $\pi$, 
then $A_S$ is $P_\pi$-free, where $P_\pi(i,\pi(i)) = 1$.
Define $A_{\textsc{Greedy}(S)}(i,j)=1$
iff the element with rank $j$ is touched by the insertion of $S(i)$.
Chalermsook et al.~\cite{ChalermsookGKMS15} proved 
that any occurrence of the 
``hat'' pattern $\hatpattern$ in $A_{\textsc{Greedy}(S)}$ contains,
within its bounding box, an input point of $A_S$, and as a consequence, 
$A_{\textsc{Greedy}(S)}$ avoids $Q = P_\pi \Kron \hatpattern$, 
where $\Kron$ is the Kronecker product, i.e., 
each 1 of $P_\pi$ is replaced by $\hatpattern$.
(Following convention, 0--1 matrices are depicted 
with blanks for 0s and bullets for 1s.  
See Section~\ref{sect:forbidden-matrix-prelims} 
for explicit definitions regarding 0--1 matrices.)
For example, if $\pi=(1,3,2,4)$, ordering rows
from bottom to top:
\begin{align*}
    P_\pi &= \scalebox{.8}{$\left(\begin{array}{cccc}
    &&&\bu\\
    &\bu&&\\
    &&\bu&\\
    \bu&&&
    \end{array}\right)$}
    &
    Q = P_\pi \Kron \hatpattern &= \scalebox{.8}{$\left(\begin{array}{c@{\hcm[.1]}c@{\hcm[.1]}c@{\hcm[.2]}c@{\hcm[.1]}c@{\hcm[.1]}c@{\hcm[.4]}c@{\hcm[.1]}c@{\hcm[.1]}c@{\hcm[.2]}c@{\hcm[.1]}c@{\hcm[.1]}c}
    &&&&&&&&&&\bu&\\
    &&&&&&&&&\bu&&\bu\\
    &&&&\bu&&&&&&&\\
    &&&\bu&&\bu&&&&&&\\
    &&&&&&&\bu&&&&\\
    &&&&&&\bu&&\bu&&&\\
    &\bu&&&&&&&&&&\\
    \bu&&\bu&&&&&&&&&
    \end{array}\right)$}
\end{align*}
If $X$ is a fixed 0--1 pattern matrix, define 
$\Ex(X,n)$ be the maximum number of 1s in an 
$n\times n$ matrix that avoids $X$. 
Thus, the running time of~\cite{ChalermsookGKMS15}
can be bounded in terms of $\Ex(Q,n)$ without knowing 
exactly what it is.

\begin{theorem}[Chalermsook, Goswami, Mehlhorn, Kozma, and Saranurak~\cite{ChalermsookGKMS15}]\label{thm:CGMKS}
If $S$ is $\pi$-free, \textsf{BST Sort} using \textsc{GreedyFuture} 
sorts $S$ in $O(\Ex(Q,n))$ time,
where $Q=P_\pi\Kron \hatpattern$.
\end{theorem}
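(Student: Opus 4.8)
The plan is to bound the running time by the number of $1$s in the \textsc{Greedy} matrix $A_{\textsc{Greedy}(S)}$ and then to show this matrix is $Q$-free. For the reduction I would invoke the geometric view of binary search trees: since \textsc{GreedyFuture} is an online BST that is $O(1)$-competitive with the offline \textsc{Greedy} algorithm, its total cost on the insertion sequence $S(1),\dots,S(n)$ — comparisons together with the rotations used to reorganize the tree — is $O(|A_{\textsc{Greedy}(S)}|)$, and $|A_{\textsc{Greedy}(S)}|$ counts exactly the element-touches made by \textsc{Greedy}. (Insertions are handled as accesses to the position of the new key, as in~\cite{ChalermsookGKMS15}; this changes nothing below, and the final in-order traversal producing the sorted output costs only $O(n)$.) Since $A_{\textsc{Greedy}(S)}$ is $n\times n$, it then suffices to prove that $A_{\textsc{Greedy}(S)}$ avoids $Q$, as this yields $|A_{\textsc{Greedy}(S)}|\le \Ex(Q,n)$.

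\textbf{The hat lemma (the crux).} I would isolate the following claim about \textsc{Greedy}: every occurrence of $\hatpattern$ in $A_{\textsc{Greedy}(S)}$ contains, inside its bounding box, a point of the input matrix $A_S$. Spelled out: if $(i_1,j_1),(i_1,j_3),(i_2,j_2)\in A_{\textsc{Greedy}(S)}$ with $i_1<i_2$ and $j_1<j_2<j_3$, then $A_S$ has a point in $[i_1,i_2]\times[j_1,j_3]$. To prove this I would use the operational description of \textsc{Greedy}: the columns touched at a given time are exactly those whose most recent point is ``visible'' (sees an empty axis-parallel rectangle) from the new access point, and the point set \textsc{Greedy} maintains is arborally satisfied. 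Starting from the fact that the peak column $j_2$ is touched at the later time $i_2$, I would trace the chain of visibility back downward; the two feet $(i_1,j_1),(i_1,j_3)$, which straddle column $j_2$ at the earlier time $i_1$, block this chain from leaving the strip of columns $[j_1,j_3]$, so it must terminate at an original access point lying in $[i_1,i_2]\times[j_1,j_3]$. I expect this lemma — and in particular getting the box's boundary exactly right so that it forces a genuinely distinct input point — to be the main difficulty.

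\textbf{Assembling the contradiction.} Suppose $A_{\textsc{Greedy}(S)}$ contained a copy of $Q=P_\pi\Kron\hatpattern$ on rows $r_1<\dots<r_{2k}$ and columns $c_1<\dots<c_{3k}$. Because each $1$ of $P_\pi$ is replaced by a $\hatpattern$-block and each $0$ by a zero block, such a copy is exactly a family of $k$ hats, the $a$-th of which occupies the row pair $\{r_{2a-1},r_{2a}\}$ and the column triple $\{c_{3\pi(a)-2},c_{3\pi(a)-1},c_{3\pi(a)}\}$ (using the convention $P_\pi(i,\pi(i))=1$). Applying the hat lemma to hat $a$ yields an input point $q_a\in A_S$ whose row lies in $[r_{2a-1},r_{2a}]$ and whose column lies in $[c_{3\pi(a)-2},c_{3\pi(a)}]$. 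These row intervals are pairwise disjoint and increasing in $a$, and the column intervals are pairwise disjoint and ordered according to $\pi$; hence $q_1,\dots,q_k$ are distinct points of $A_S$ whose relative order realizes $\pi$, i.e., $A_S$ contains $P_\pi$. This contradicts the hypothesis that $S$ is $\pi$-free. Therefore $A_{\textsc{Greedy}(S)}$ avoids $Q$, and combining with the first paragraph gives the claimed $O(\Ex(Q,n))$ time bound for \textsf{BST Sort} with \textsc{GreedyFuture}.
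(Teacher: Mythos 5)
Your proposal matches the approach the paper itself sketches for this cited result: the crux is the ``hat lemma'' (every $\hatpattern$ occurrence in $A_{\textsc{Greedy}(S)}$ encloses an input point of $A_S$ within its bounding box), which the paper states verbatim in the introduction and attributes to~\cite{ChalermsookGKMS15}, and your assembly of the $k$ disjoint hats of a $Q$-occurrence into a forbidden copy of $P_\pi$ in $A_S$ is the standard and correct consequence. The paper does not itself prove the hat lemma (deferring to the cited reference), so your argument is a sketch in exactly the same place the paper's is.
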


Observe that $Q$ is a $2k\times 3k$ 
\emph{light} pattern: 
it contains exactly one 1 per column.
There is a well known connection between light patterns and generalized Davenport-Schinzel sequences~\cite{Klazar92,FurediH92,Keszegh09,Pettie-GenDS11,Pettie15-SIDMA}.
Applying a simplifying transformation that collapses the first two rows
\cite[Thm. 2.2]{FurediH92} 
and then \cite[Thm. 1.3]{Pettie15-SIDMA}, 
we have the following general upper bound, 
where $\alpha(n)$ is the inverse-Ackermann function.
\begin{equation}\label{eqn:general-upper-bound}
\Ex(Q,n) \leq \left\{
\begin{array}{l@{\hcm[1]}l}
2n\alpha(n)+O(n)      & k=2\\
n\cdot 2^{(1+o(1))\alpha^t(n)/t!} 
    & \mbox{$k$ odd, $t=(3k-5)/2$}\\
n\cdot (\alpha(n))^{(1+o(1))\alpha^{t}(n)/t!} 
    & \mbox{$k$ even, $t=(3k-6)/2$}
\end{array}
\right.
\end{equation}
Thus, by Theorem~\ref{thm:CGMKS}, \textsc{GreedyFuture} 
sorts $S$
in $O(n\cdot 2^{\alpha(n)^{3k/2-O(1)}})$ time.  
On the lower bound side, we know that $\Ex(Q,n)=\Omega(n\alpha(n))$
as every $Q$ contains one of the two patterns shown below,
which are 
associated with order-3 Davenport-Schinzel sequences~\cite{HS86,FurediH92}.
\[
\scalebox{.7}{$\left(\begin{array}{cccc}
\bu&&&\\
&&\bu&\\
&\bu&&\bu\end{array}\right)$}
\hcm
\scalebox{.7}{$\left(\begin{array}{cccc}
&&&\bu\\
&\bu&\\
\bu&&\bu\end{array}\right)$}
\]

\medskip 

The \textsc{Greedy} algorithm is theoretically attractive, 
but cumbersome to implement online as \textsc{GreedyFuture}~\cite{DemaineHIKP09}.
Kozma and Saranurak~\cite{KozmaS20} introduced 
a new heap data structure called a \textsc{SmoothHeap},
and proved \textsf{Heap Sort} with \textsc{SmoothHeap}
is \emph{equivalent} to \textsf{BST Sort} with \textsc{Greedy}.
Moreover, \textsc{SmoothHeap} is ``naturally'' an online algorithm, 
and is easier to implement than \textsc{GreedyFuture}.
One can define an $n\times n$ 0--1 matrix $A_{\textsc{SmoothHeap}(S)}$ in the same way, where $A_{\textsc{SmoothHeap}(S)}(i,j)=1$ 
iff the $i$th Delete-Min touches the element with rank $j$.  
It is proved that $A_{\textsc{SmoothHeap}}$ avoids a matrix equivalent to $Q$.\footnote{Strictly speaking the equivalence between \textsc{Greedy} and \textsc{SmoothHeap} swaps the roles of time and space.
Sorting $S$ with \textsc{Greedy} is isomorphic 
to sorting $S^T$ with \textsc{SmoothHeap}, 
where $S^T$ is the transpose permutation: $S(i)=j \Leftrightarrow S^T(j)=i$.
Note that $S^T$ avoids $\pi^T$.  Since the extremal functions for $Q$ and $Q^T$ are identical on square matrices, we infer
that the time to sort $S^T$ is also $O(\Ex(Q,n))$.}

\begin{theorem}[Kozma and Saranurak~\cite{KozmaS20}]\label{thm:KS}
If $S$ is $\pi$-free, \textsf{Heap Sort} 
using \textsc{GreedyFuture} sorts $S$ in $O(\Ex(Q,n))$ time,
where $Q=P_\pi \Kron \hatpattern$.
\end{theorem}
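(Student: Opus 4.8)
The plan is to deduce the bound from Theorem~\ref{thm:CGMKS} by reusing the forbidden-matrix machinery of~\cite{ChalermsookGKMS15}; in the statement \textsc{SmoothHeap} should stand in place of \textsc{GreedyFuture}. First I would argue that the running time of \textsf{Heap Sort} with \textsc{SmoothHeap} equals, up to constant factors, the number of $1$s in the footprint matrix $A_{\textsc{SmoothHeap}(S)}$: each \textsf{Insert} costs $O(1)$, and the $i$-th \textsf{Delete-Min} costs $\Theta$ of the number of elements it touches, i.e.\ the number of $1$s in row $i$. Thus the total time is $O\bigl(n+\lVert A_{\textsc{SmoothHeap}(S)}\rVert_1\bigr)$, and it remains to show $\lVert A_{\textsc{SmoothHeap}(S)}\rVert_1 = O(\Ex(Q,n))$, i.e.\ that $A_{\textsc{SmoothHeap}(S)}$ avoids some matrix equivalent to $Q=P_\pi\Kron\hatpattern$ under the transpose/reflection symmetries of the extremal function on square matrices.

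The technical core is an exact step-by-step equivalence between \textsf{Heap Sort} with \textsc{SmoothHeap} and \textsf{BST Sort} with \textsc{Greedy}, in which the roles of time and key-rank are swapped: running \textsc{SmoothHeap} on $S$ is isomorphic to running \textsc{Greedy} on the transpose permutation $S^T$. I would prove this by an inductive invariant: both algorithms are driven by the same ``stable'' local-restructuring rule --- \textsc{Greedy} touches and rebuilds a canonical region of its search tree on each insertion, \textsc{SmoothHeap} consolidates a canonical sub-forest on each \textsf{Delete-Min} --- and one checks that, after $i$ steps, the \textsc{SmoothHeap} heap on $S$ and the \textsc{Greedy} tree on $S^T$ remain in bijective correspondence, with the set of keys touched by the $i$-th \textsf{Delete-Min} of \textsc{SmoothHeap}$(S)$ equal, after transposing coordinates, to the set of time-indices touched by the $i$-th insertion of \textsc{Greedy}$(S^T)$. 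Hence $A_{\textsc{SmoothHeap}(S)}$ is isomorphic to the transpose of $A_{\textsc{Greedy}(S^T)}$.

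Granting the equivalence, the forbidden pattern transfers directly. Since $S$ is $\pi$-free, $S^T$ is $\pi^T$-free, so $A_{S^T}$ is $P_{\pi^T}$-free; by the geometric lemma of~\cite{ChalermsookGKMS15}, every occurrence of $\hatpattern$ in $A_{\textsc{Greedy}(S^T)}$ contains an input point of $A_{S^T}$ in its bounding box, so $A_{\textsc{Greedy}(S^T)}$ avoids $P_{\pi^T}\Kron\hatpattern$. Transposing, $A_{\textsc{SmoothHeap}(S)}$ avoids $(P_{\pi^T}\Kron\hatpattern)^T$, which is obtained from $Q$ by transposition and a reflection and therefore has the same extremal function on $n\times n$ matrices; thus $\lVert A_{\textsc{SmoothHeap}(S)}\rVert_1\le\Ex(Q,n)$, finishing the proof. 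The main obstacle is the step-by-step equivalence of the previous paragraph: the work is to formalize the ``stable'' linking rule of \textsc{SmoothHeap} and verify that it exactly reproduces, under the time/rank swap, the touch sets of \textsc{Greedy}; the geometric lemma, the symmetry-invariance of $\Ex$, and the amortized cost accounting are then routine or already established.
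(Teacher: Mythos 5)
Your proposal correctly identifies the key structure: this theorem is a citation of the Kozma--Saranurak SmoothHeap/Greedy equivalence, not something the paper reproves, and you correctly spot that the statement's ``\textsc{GreedyFuture}'' is a typo for ``\textsc{SmoothHeap}.'' Your outline (cost bounded by $\lVert A_{\textsc{SmoothHeap}(S)}\rVert_1$; step-by-step isomorphism between \textsc{SmoothHeap} on $S$ and \textsc{Greedy} on $S^T$ swapping time and rank; transfer of the forbidden pattern through the transpose) is essentially what the paper sketches in the footnote following Theorem~\ref{thm:CGMKS}, with the heavy lifting---the stable-linking invariant establishing the isomorphism---deferred to~\cite{KozmaS20} exactly as the paper does.

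One small imprecision to flag: after transposing, what $A_{\textsc{SmoothHeap}(S)}$ actually avoids is $(P_{\pi^T}\Kron\hatpattern)^T = P_\pi\Kron\hatpattern^T$, and equating its extremal function with $\Ex(Q,n)=\Ex(P_\pi\Kron\hatpattern,n)$ requires $\Ex(P_{\pi^T}\Kron\hatpattern,n)=\Ex(P_\pi\Kron\hatpattern,n)$, which is \emph{not} a direct consequence of $\Ex(X,n)=\Ex(X^T,n)$ alone (the permutation $\pi^T$ is not generally a reflection of $\pi$). The paper's own footnote has the same looseness; it is harmless here because the upper bound of Theorem~\ref{thm:main} is uniform over all $k$-permutations, so $\Ex(P_{\pi^T}\Kron\hatpattern,n)$ and $\Ex(P_\pi\Kron\hatpattern,n)$ obey the same asymptotic bound, but if you wanted the theorem to hold literally as stated you should either phrase the bound in terms of $\max(\Ex(P_\pi\Kron\hatpattern,n),\Ex(P_{\pi^T}\Kron\hatpattern,n))$ or note that the two are within the same $O_k(\cdot)$ bound.
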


The main outstanding question is whether it is possible
to sort in $O_k(n)$ time, and in particular, whether
the \textsc{Greedy}- or \textsc{SmoothHeap}-based 
algorithms of ~\cite{ChalermsookGKMS15,KozmaS20}
already sort in time $O_k(n)$.
It would also be interesting to give a non-trivial upper bound
on the complexity of \textsf{BST Sort} with a \textsc{Splay Tree}~\cite{SleatorT85},
or \textsf{Heap Sort} with a \textsc{Pairing Heap}~\cite{F+86}.

\subsection{New Results}

\subsubsection{Upper Bounds}

Our main result is a new upper bound on the extremal function of $P_\pi \Kron\hatpattern$-type matrices
that has a much weaker dependence on $k$,
which immediately gives better upper bounds on 
the complexity of sorting $\pi$-free sequences via~\cite{ChalermsookGKMS15,KozmaS20}.

\begin{theorem}\label{thm:main}
Let $P_\pi$ be the $k\times k$ permutation matrix
of $\pi$ and $Q = P_\pi \Kron\hatpattern$ be 
a $2k\times 3k$ light matrix.
Then
\[
\Ex(Q,n) \leq n\cdot \left(2^{O(k^2)} + O(\alpha(n))^{3k-2}\right)2^{\alpha(n)} 
    = n\cdot 2^{O(k^2) + (1+o(1))\alpha(n)}.
\]
\end{theorem}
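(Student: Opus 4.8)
\emph{Proof plan.}
The plan is to run the standard inverse-Ackermann recursion for forbidden $0$--$1$ matrices, in the style of the Davenport--Schinzel analyses underlying~\cite{Pettie15-SIDMA}, but organized so that the two sources of growth stay \emph{separate}: climbing the Ackermann hierarchy, which is forced by the ``hat'' $\hatpattern$ and should cost only a bounded number of levels independent of $k$; and paying for the permutation matrix $P_\pi$, which should cost only a multiplicative $2^{O(k^2)}$ via a Marcus--Tardos~\cite{MarcusT04} type estimate. Letting these two effects compound --- i.e.\ recursing on a pattern whose Davenport--Schinzel order grows with $k$ --- is exactly what produces the old doubly-exponential-in-$k$ bound coming from~(\ref{eqn:general-upper-bound}). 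The bottom of the recursion is trivial: $Q$ is light, so $\Ex(Q,n)\le n^2$, whence $\Ex(Q,n)\le n\cdot 2^{O(k^2)}$ for all $n\le 2^{O(k^2)}$; this crude estimate is the sole origin of the $2^{O(k^2)}$ term in the statement. (For orientation at the other extreme, every copy of $Q$ contains one of the two order-$3$ patterns displayed before the theorem, so $\Ex(Q,n)=\Omega(n\alpha(n))$; the target $2^{(1+o(1))\alpha(n)}$ asserts that the truth is only one Ackermann level higher, up to the $k$-dependent factor.)

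\emph{Decomposition and the key structural fact.}
Let $M$ be $n\times n$, $Q$-free, with $\Ex(Q,n)$ ones; partition its columns into $q$ consecutive blocks of width $n/q$, keeping all rows. For a row $r$ and a block $B$, call the leftmost and rightmost $1$ of $r$ in $B$ \emph{boundary} (at most $2q$ per row, $2qn$ in all) and the remaining ones of $r$ in $B$ \emph{interior}; an interior $1$ of $r$ has another $1$ of $r$ on each side of it within $B$. Recursing on the trimmed block submatrices contributes $q\cdot\Ex(Q,n/q)$, with the usual bookkeeping against double counting, so everything reduces to bounding the number of interior $1$s; we organize these through the coarsened matrix $M'$ with one row per row of $M$, one column per block, and $M'(r,\ell)=1$ iff $r$ has an interior $1$ in block $B_\ell$. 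The key structural fact is that a block can act as a \emph{band-column}: if $r$ has an interior $1$ in $B_\ell$, then that $1$ together with \emph{any} $1$ lying in $B_\ell$, in a strictly higher row, and in a column strictly between $r$'s two flanking $1$s, forms a copy of $\hatpattern$ whose column-span lies inside $B_\ell$.

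\emph{Forcing $Q$ and assembling the recurrence.}
Consequently, if one can locate blocks $B_{\ell_1}<\cdots<B_{\ell_k}$ and rows $r_1<\rho_1<r_2<\rho_2<\cdots<r_k<\rho_k$ such that, for each $i$, some block realizes such a hat with bottom row $r_i$ and top row $\rho_i$, with the blocks assigned to the hats in the order dictated by $\pi$, then $M$ contains $Q$ --- a contradiction. Hence $M'$, enriched with the side information of which blocks also carry a $1$ above the relevant interior $1$, admits no ``$2k$-thick'' occurrence of $P_\pi$; the Füredi--Hajnal/Marcus--Tardos estimate for $k\times k$ permutation matrices then caps the enriched $M'$ at $2^{O(k^2)}$ ones per column, and --- crucially --- this cap on the coarsened object is only ``order-$3$ hard'' in $n$, contributing a lone factor of $\alpha(n)$. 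Assembling the three contributions yields a recurrence of the shape
\[
\Ex(Q,n)\ \le\ q\cdot\Ex(Q,n/q)\ +\ 2^{O(k^2)}\cdot n\cdot\bigl(\text{an order-}3\text{ term in }n\bigr)\ +\ O(qn),
\]
and feeding it the Ackermann-hierarchy schedule of block counts $q$, started from the $n\le 2^{O(k^2)}$ base case, unwinds in $\Theta(\alpha(n))$ rounds to $\Ex(Q,n)\le n\bigl(2^{O(k^2)}+O(\alpha(n))^{3k-2}\bigr)2^{\alpha(n)}$; the factor $O(\alpha(n))^{3k-2}$ is the accumulation over the $\Theta(\alpha(n))$ rounds of the boundary-$1$ count together with a $\poly(k)$-degree loss incurred each time the block structure is coarsened.

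\emph{Main obstacle.}
The heart of the matter is the level accounting: one must prove that Kronecker-multiplying $\hatpattern$ by $P_\pi$ raises the Davenport--Schinzel order by a bounded amount only (from $3$, i.e.\ $\alpha(n)$, to $4$, i.e.\ $2^{\alpha(n)}$), not by $\Theta(k)$ as the black-box order-$3k$ estimate of~(\ref{eqn:general-upper-bound}) would suggest. Concretely, the recursion must be arranged so that the residual subproblem at every level is again $Q$ itself, with identical $k$, pushing all of the $k$-dependence into the Marcus--Tardos multiplier and the $\poly(\alpha)$ overhead and none of it into the height of the $\alpha$-tower. Making the hat-forcing step robust enough to sustain this separation --- in particular, guaranteeing that ``enough interior $1$s in a block, each with a $1$ above it'' yields, across $k$ blocks at once, a copy of the \emph{permuted} arrangement $P_\pi$ and hence of $Q$ --- is where the real work lies; everything else is the familiar, if delicate, bookkeeping of the inverse-Ackermann recursion.
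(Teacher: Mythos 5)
Your high-level plan---isolate the $k$-dependence into a single multiplicative $2^{O(k^2)}$ factor via a Marcus--Tardos bound and push all of the growth in $n$ through a $k$-independent Ackermann ladder---does match the paper's goal, and the observation that an interior $1$ of a row inside a column block, together with a $1$ above it in the same block between its flanking $1$s, forms a $\hatpattern$ is in the right spirit. But the step you flag as ``the heart of the matter'' (turning $k$ such hats, one per block, in $\pi$-order into an occurrence of $Q$) is exactly where the paper does nontrivial work, and your sketch does not supply it. What's missing is not a detail but structure. The paper partitions into a \emph{two-dimensional} grid of $G\times B$ blocks (row slabs of size $G$ \emph{and} column slabs of size $B$), declares a block heavy iff it contains a $\hatpattern$, and observes that contracting each block to a single cell produces an $(n^*/G)\times(m/B)$ matrix that is $P_\pi$-free---this is what Marcus--Tardos is applied to. Your decomposition keeps \emph{all} rows while coarsening only columns, so the relevant contracted object is $n\times q$ and the Marcus--Tardos bound $O_k(n+q)$ gives nothing useful per column; the claim that the enriched $M'$ is capped at ``$2^{O(k^2)}$ ones per column'' does not follow from $\Ex(P_\pi,\cdot)$ being linear. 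Separately, the light-middle $1$s (inside non-heavy blocks) require a further argument that a chunked contraction is $P_\pi\Kron\oneone$-free---a second forbidden pattern, not just $P_\pi$---which your sketch does not anticipate.

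There is also a quantitative mismatch in where the $2^{\alpha(n)}$ and the $O(\alpha(n))^{3k-2}$ come from. In the paper, the factor $2^{\alpha(n)}$ is created by the light-first/light-last term $2\Ex(Q_{a,b},n^*,m/B)$: contracting each column slab to a single column reduces $m$ by a factor $B$ (one Ackermann level) while incurring a multiplicative $2$, so the coefficient doubles per level, giving $\mu_{i,t}\ge 2^i$. Your recurrence $\Ex(Q,n)\le q\Ex(Q,n/q)+2^{O(k^2)}n\alpha(n)+O(qn)$ has no column-contracted self-call with coefficient $>1$, and a recurrence of that shape does not unwind to $n2^{\alpha(n)}$; you would get roughly $n\alpha(n)$ times a log-type factor. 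The exponent $3k-2$ also has a specific source you do not reproduce: the paper recurses on \emph{truncated} patterns $Q_{a,b}$ (with a nonstandard containment in which all-zero rows must match nonzero rows of $A$---Definition~\ref{def:Qprec}), and the ``first/last'' and ``heavy'' $1$s recurse to $Q_{a,b+1}$, $Q_{a+1,b}$, $Q_{a+1,b+1}$. The two-parameter induction on $(i,t)$ with $t=3k-(a+b)$ running from $3k$ down to $2$ is what produces $\mu_{i,t}=(2C+3i)^{t-2}(2^i-1)$ and hence the $(O(\alpha(n)))^{3k-2}$ polynomial. Without the $Q_{a,b}$ device you have no mechanism for the first/last and heavy classes to recurse on a genuinely \emph{smaller} pattern while the local and light classes recurse on the same one, and your ``boundary-$1$ accumulation'' story does not explain the exponent $3k-2$. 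So the proposal captures the target shape of the bound but not a proof of it: the $2$D block structure, the $Q_{a,b}$ truncation with modified containment, and the $P_\pi\Kron\oneone$-freeness of the light-middle contraction are all missing, and each is load-bearing.
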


\begin{corollary}\label{cor:main}
If $S$ is $\pi$-free, then \textsf{BST Sort} 
using \textsc{GreedyFuture} and \textsf{Heap Sort}
using the \textsc{SmoothHeap} 
will sort $S$ in $n\cdot 2^{O(k^2) + (1+o(1))\alpha(n)}$ time.
\end{corollary}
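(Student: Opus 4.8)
The final statement, Corollary~\ref{cor:main}, is immediate from what precedes it: by Theorems~\ref{thm:CGMKS} and~\ref{thm:KS} both sorting procedures run in time $O(\Ex(Q,n))$ for $Q=P_\pi\Kron\hatpattern$, and Theorem~\ref{thm:main} bounds $\Ex(Q,n)$ by $n(2^{O(k^2)}+O(\alpha(n))^{3k-2})2^{\alpha(n)}$. Since $\alpha(n)^{3k-2}=2^{(3k-2)\log\alpha(n)}$ and $6k\log\alpha(n)\le 9k^2+(\log\alpha(n))^2$ with $(\log\alpha(n))^2=o(\alpha(n))$, the bracketed factor is $2^{O(k^2)+o(\alpha(n))}$, so the whole bound collapses to $n\cdot 2^{O(k^2)+(1+o(1))\alpha(n)}$. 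Thus all the content is in Theorem~\ref{thm:main}, which I sketch as follows.

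The starting observation is why the off-the-shelf estimate~\eqref{eqn:general-upper-bound} is so wasteful in $k$: after the reductions of~\cite[Thm.~2.2]{FurediH92} and~\cite[Thm.~1.3]{Pettie15-SIDMA}, $Q$ is handled as a \emph{generic} light matrix of width $3k$, i.e.\ as a generalized Davenport--Schinzel problem of order $\Theta(k)$, and the known bounds for order $s$ carry a tower of height $\sim s/2$. But $Q$ is far from generic: it is the permutation $\pi$ ``drawn with hats'', the $i$-th hat occupying rows $\{2i-1,2i\}$ and columns $\{3\pi(i)-2,3\pi(i)-1,3\pi(i)\}$, and the depth to which columns of $Q$ can genuinely interleave is controlled by the $2\times 3$ hat, not by $k$. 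The plan is therefore to run a recursion on the host dimension $n$ that keeps the effective Ackermann order bounded — order $\approx 4$, which gives the $2^{\alpha(n)}$ shape — and lets $k$ enter only through multiplicative coefficients.

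The mechanism is an induction on $k$ layered on top of the classical block-contraction recursion. The base case is constant $k$ (e.g.\ $k\le 2$, where~\eqref{eqn:general-upper-bound} already gives $O(n\alpha(n))$). For the inductive step, let $M$ be $P_\pi\Kron\hatpattern$-free with $\pi$ a $(k+1)$-permutation; after the standard light-matrix normalization it suffices to bound the number of essential columns of $M$. Partition the columns into blocks of width $b$ and contract each block to a single column recording the $O(b)$ rows it occupies, obtaining $\widehat M$. The key lemma is that $\widehat M$ avoids $P_{\pi'}\Kron\hatpattern$, where $\pi'$ is $\pi$ with one suitably chosen entry deleted (the hat ``spent'' by the blocking): an occurrence of $P_{\pi'}\Kron\hatpattern$ in $\widehat M$, together with one hat reconstructed on two fresh rows inside a single block, rebuilds $P_\pi\Kron\hatpattern$ in $M$. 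Feeding $\widehat M$ to the induction hypothesis, the within-block residue to a direct order-$4$ count, and optimizing $b=2^{\Theta(k)}$ so that ``$O(b)$ occupied rows per contracted column'' remains affordable, each level of the induction multiplies the bound by $2^{O(k)}\cdot O(\alpha(n))^{O(1)}$; unrolling the $k$ levels yields a prefactor of the advertised form $2^{O(k^2)}+O(\alpha(n))^{3k-2}$ on top of the $n\cdot 2^{\alpha(n)}$ coming from the bounded-order recursion.

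The hard part is the contraction lemma, which must be simultaneously correct and cheap: it has to spend exactly one hat (one unit of $k$) per level, and the per-level price must stay $2^{O(k)}$. If the contraction damages more of $\pi$'s structure, the induction fails to close; if the blocks must be widened until the number of occupied rows per contracted column grows faster than $2^{O(k)}$, the bound relapses to the doubly-exponential-in-$k$ regime of~\eqref{eqn:general-upper-bound} that the theorem is designed to escape. Threading this needle requires choosing the sacrificed entry of $\pi$ and the block width in a way compatible with the $2\times 3$ geometry of $\hatpattern$ — so that a genuine hat on two new rows really does survive inside one block, at the correct column position relative to the rest of the occurrence — and splitting the $1$s of $M$ cleanly into those chargeable to sub-instances, those chargeable to $\widehat M$, and an $O(n\cdot 2^{O(k)})$ remainder whose accounting across all $\Theta(\alpha(n))$ levels of the inner recursion is the most delicate computation in the proof.
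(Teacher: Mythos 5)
Your reduction of the corollary to Theorems~\ref{thm:CGMKS}, \ref{thm:KS} and \ref{thm:main} is exactly what the paper intends; the corollary has no separate proof in the paper precisely because it is this immediate combination, and your arithmetic absorbing the $O(\alpha(n))^{3k-2}$ factor into the exponent $2^{O(k^2)+(1+o(1))\alpha(n)}$ is sound.

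Your accompanying sketch of Theorem~\ref{thm:main}, however, diverges substantially from the paper's proof, and I do not believe the central step works. The paper never deletes entries of $\pi$ or inducts on $k$. Instead it inducts on a triple $(i,j,t)$ where $i$ is an inverse-Ackermann index, $j$ a block-count parameter, and $t = 3k-(a+b)$ is the number of columns remaining in the column-shaved pattern $Q_{a,b}$, obtained from $Q = P_\pi\Kron\hatpattern$ by removing the $a$ leftmost and $b$ rightmost columns --- under the modified containment relation of Definition~\ref{def:Qprec}, which refuses to match all-zero rows of $Q_{a,b}$ to all-zero rows of the host. The permutation $\pi$ survives intact at every level; it enters the bound only through the Marcus--Tardos-type constants $C_k, C_k' = 2^{O(k)}$ of Theorem~\ref{thm:perm-doubleperm-constant}. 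These appear when global middle 1s are partitioned into $G\times B$ blocks: the \emph{heavy} blocks (those containing a $\hatpattern$) contract in \emph{both} rows and columns to a matrix $\Aheavy^{\contract}$ that is $P_\pi$-free, while the light-middle 1s, organized into chunks, contract to a $P_\pi\Kron\oneone$-free matrix $\Amidmid$. The $2^{O(k^2)}$ factor falls out of the closed-form recursion coefficient $\mu_{i,t} = (2C + 3i)^{t-2}(2^i - 1)$ at $t = 3k$ with $C = 2^{O(k)}$, not from unrolling $k$ nested levels.

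The contraction lemma you propose --- that the column-contracted $\widehat M$ avoids $P_{\pi'}\Kron\hatpattern$ with $\pi'$ obtained by deleting one entry of $\pi$ --- is the step that does not go through. An occurrence of $P_{\pi'}\Kron\hatpattern$ in $\widehat M$ commits to $2(k-1)$ specific rows of $M$ (column contraction does not merge rows), and nothing forces a spare $\hatpattern$ to materialize on two fresh rows at the precise vertical position dictated by $\pi$, inside whichever block you assign to the deleted entry; a $P_\pi\Kron\hatpattern$-free matrix can certainly have blocks with no $\hatpattern$ at all. The paper sidesteps this by only contracting blocks that are by definition heavy, so reconstruction of $P_\pi\Kron\hatpattern$ from a $P_\pi$-occurrence in $\Aheavy^{\contract}$ is automatic: each 1 of $P_\pi$ expands to a hat inside its own heavy block. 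That reduction from $P_\pi\Kron\hatpattern$ all the way to $P_\pi$ in one step, on the heavy part only, together with the $P_\pi\Kron\oneone$ reduction on the light part, is what keeps the effective order bounded at $2^{\alpha(n)}$; there is no one-entry-at-a-time descent on $k$.
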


One can view 
Corollary~\ref{cor:main} as improving on 
the $n2^{\alpha(n)^{3k/2-O(1)}}$ 
bound of (\ref{eqn:general-upper-bound})
in two ways.  
It is an asymptotic improvement in $n$ as it brings the exponent of $\alpha(n)$ from $3k/2-O(1)$ down to $1$.
However, even if one is tempted to consider $\alpha(n)$ to be a small constant, it also reduces the dependency on $k$ from \emph{doubly} exponential to merely \emph{singly} exponential. 

It is possible to improve the factor $2^{\alpha(n)}$  for a specific product pattern. For example, 
\begin{theorem}\label{thm:identity-product}
If $I_k$ is the $k \times k$ identity matrix, then  
$$\Ex(I_k\Kron\hatpattern,n) \leq 2(k-1)n\alpha(n)+O(kn).$$
\end{theorem}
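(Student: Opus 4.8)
The plan is to reduce the whole bound, by a clean recursion on $k$, to the single fact that a matrix with no ``staircased pair'' of hats has at most $2n\alpha(n)+O(n)$ ones. Writing $f_k(n)=\Ex(I_k\Kron\hatpattern,n)$, I would first establish the recursion $f_k(n)\le f_{k-1}(n)+f_2(n)$ for $k\ge 3$, whence $f_k(n)\le(k-1)f_2(n)$, and then prove the base estimate $f_2(n)\le 2n\alpha(n)+O(n)$; together these give $f_k(n)\le(k-1)\bigl(2n\alpha(n)+O(n)\bigr)=2(k-1)n\alpha(n)+O(kn)$, which is the theorem.

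\textbf{The recursion.}
Let $M$ be an $n\times n$ matrix avoiding $I_k\Kron\hatpattern$. Since $I_k\Kron\hatpattern$ is simply a staircase of $k$ hats -- $k$ copies of $\hatpattern$ in which each copy's rows lie entirely above, and its columns entirely to the right of, the previous copy -- it is natural to call a $1$-entry $(r,c)$ of $M$ \emph{active} if $M$ contains a copy of $\hatpattern$ lying strictly below row $r$ and strictly left of column $c$, and \emph{passive} otherwise; let $M_{\mathrm{act}},M_{\mathrm{pas}}$ be the two induced submatrices, which partition the $1$-entries of $M$. A $(k-1)$-hat staircase inside $M_{\mathrm{act}}$ would have, as the left foot of its first hat, an active entry $(r,c)$; the hat of $M$ witnessing its activity lies strictly below $r$ and strictly left of $c$, hence strictly below and to the left of the entire first hat, so prepending it produces a $k$-hat staircase in $M$ -- a contradiction. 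Hence $M_{\mathrm{act}}$ avoids $I_{k-1}\Kron\hatpattern$. Likewise, a $2$-hat staircase inside $M_{\mathrm{pas}}$ would have the left foot of its \emph{second} hat strictly above and to the right of its first hat, so that entry would be active rather than passive -- a contradiction -- and therefore $M_{\mathrm{pas}}$ avoids $I_2\Kron\hatpattern$. Counting $1$-entries, $f_k(n)$ is at most the number of $1$s of $M_{\mathrm{act}}$ plus that of $M_{\mathrm{pas}}$, which is at most $f_{k-1}(n)+f_2(n)$.

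\textbf{The base case and the main obstacle.}
The remaining task -- and, I expect, the real difficulty -- is the base estimate $f_2(n)\le 2n\alpha(n)+O(n)$: the absence of two staircased hats must force the entry count down to the third-order Davenport--Schinzel regime, \emph{with leading constant exactly $2$}. The special structure of the identity is what keeps this at ``order $3$'': only two hats are chained, so the analysis does not iterate up the Ackermann hierarchy, as it would for a generic $P_\pi\Kron\hatpattern$. I would prove it by a direct, $\lambda_3$-style argument on $I_2\Kron\hatpattern$-free matrices -- splitting off the entries that can serve as the apex of a hat and recursing on the remaining structure with the Ackermann-type partition underlying the sharp bound $\lambda_3(n)\le 2n\alpha(n)+O(n)$ -- rather than through a \Furedi--Hajnal row contraction, since such contractions appear to cost a multiplicative $\alpha(n)$ factor and would ruin the constant. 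The two points to get exactly right are: (i) that the corrections discarded at each stage of this argument total only $O(n)$, not $\Omega(n\alpha(n))$, so the ``$+O(n)$'' is honest and the leading coefficient is $2$ rather than larger; and (ii) that the active/passive dichotomy above is a genuine partition whose parts really avoid $I_{k-1}\Kron\hatpattern$ and $I_2\Kron\hatpattern$ -- the splicing steps rely on the left foot of a hat being simultaneously its lowest and its leftmost cell, a coincidence special to $\hatpattern$ that fails for a general product pattern $P_\pi\Kron\hatpattern$ and is exactly why those cases are so much harder.
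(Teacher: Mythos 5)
Your active/passive recursion is correct and constitutes a genuinely different reduction in $k$ from the paper's. The paper applies Keszegh's corner-joining theorem to the identity $I_k\Kron\hatpattern=(I_{k-1}\Kron\hatpattern)\oplus\ababa$, obtaining $\Ex(I_k\Kron\hatpattern,n,m)\le\Ex(I_{k-1}\Kron\hatpattern,n,m)+\Ex(\ababa,n,m)$. Your split of the $1$s into \emph{active} (admitting a witnessing $\hatpattern$ strictly below and to the left) and \emph{passive} yields $f_k\le f_{k-1}+f_2$ without the black-box appeal, at the harmless cost of $f_2$ in place of $\Ex(\ababa)$ on the right. Your verification that $M_{\mathrm{act}}$ is $I_{k-1}\Kron\hatpattern$-free and $M_{\mathrm{pas}}$ is $I_2\Kron\hatpattern$-free is sound and correctly exploits the coincidence you name, that a hat's left foot is simultaneously its bottommost and leftmost cell.

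The gap is the base case, which you flag as ``the real difficulty'' but do not prove. Your plan is to establish $f_2(n)\le 2n\alpha(n)+O(n)$ by a fresh $\lambda_3$-style Ackermann recursion; that would amount to re-deriving a sharp Davenport--Schinzel bound from scratch for a slightly larger pattern, a substantial task you have not carried out, and whether your sketched argument really produces the leading constant $2$ is exactly the kind of claim that cannot be asserted without the estimates. Moreover, your stated reason for rejecting a reduction to a known pattern---that it ``appears to cost a multiplicative $\alpha(n)$ factor''---confuses row/column \emph{contraction} (which can be lossy) with mere \emph{deletion} of $1$s (which is only additive). The paper's base case is a short deletion argument: from an $I_2\Kron\hatpattern$-free $A$, remove the top $1$ in each column and then the first two $1$s in each row; the result $A'$ is $\ababa$-free, because the isolated point of any surviving $\ababa$ has two removed $1$s to its left in its row, one of which (having survived the column pass) has a $1$ above it, and these three reconstitute a second $\hatpattern$ above and to the left of $\ababa$'s hat, i.e., an $I_2\Kron\hatpattern$ in $A$. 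This costs $2n+m$ and reduces to the known bound $\Ex(\ababa,n,m)=2n\alpha(n,m)+O(n+m)$. With that lemma your recursion finishes the theorem; without it the crucial $2n\alpha(n)+O(n)$ estimate remains unestablished.
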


\subsubsection{Lower Bounds}

When $k\geq 2$, all $Q=P_\pi \Kron\hatpattern$ patterns contain $\ababa$ or its reflection, 
which is known to have extremal function 
$\Ex(\ababa,n) = 2n\alpha(n)\pm O(n)$~\cite{HS86,FurediH92,Nivasch10,Pettie-DS-JACM}.

We prove that $\Ex(P_\pi\Kron\hatpattern,n)=\Omega(n2^{\alpha(n)})$ 
whenever $\pi$ contains $(3,1,2)$ or $(2,1,3)$,
or equivalently, when $P_\pi$ contains 
$\threeonetwo$ or $\twoonethree$.
Thus, Theorem~\ref{thm:W} implies that the general upper bound of Theorem~\ref{thm:main}
can only be improved in the $\poly(\alpha(n))$ factor.

\begin{theorem}\label{thm:W}
$\Ex(W,n) = \Theta(n2^{\alpha(n)})$, where
\[
W = \scalebox{.7}{$\left(\begin{array}{ccccc}
\bu&&&&\\
&&&& \bu\\
&&\bu&&\\
&\bu&&\bu&
\end{array}
\right)$.}
\]
\end{theorem}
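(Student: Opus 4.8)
The plan is to prove the two bounds separately. Throughout it is convenient to read the $1$s of $W$ column by column: this gives the length-$5$ word $d\,a\,b\,a\,c$ on the four rows of $W$, whose only repeated letter is the bottom-row letter $a$ (occupying columns $2$ and $4$), flanked by the three singletons $d$ in the top row (column $1$), $b$ in the second row (column $3$), and $c$ in the third row (column $5$). Deleting column $1$ leaves the left--right mirror of $\ababa$ on rows $1$--$3$, so $\Ex(W,n)\ge \Ex(\ababa,n)=\Omega(n\alpha(n))$ for free; the content of the theorem is the jump from $\alpha(n)$ to $2^{\alpha(n)}$ in both directions.

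\textbf{Lower bound.} I would build a $W$-free $0$--$1$ matrix with $n$ columns and $\Omega(n\,2^{\alpha(n)})$ ones by the recursive, inverse-Ackermann-indexed construction that realizes the $\lambda_4$ lower bound -- i.e. one level of the Ackermann recursion above the classical construction realizing $\Ex(\ababa,n)=\Theta(n\alpha(n))$~\cite{HS86,Nivasch10,Pettie-DS-JACM}. Such constructions are assembled by recursively substituting ``local'' copies of the previous construction into the columns of an outer skeleton, and the invariant to maintain is that no occurrence of $W$ can straddle the substitution boundaries. Informally: whenever the two copies of the bottom letter $a$ fall in a common local copy, every letter strictly between them also falls in that copy, and -- because the taller letters of the construction are introduced in a monotone, layered fashion -- the guard letter $d$ to the left and the letter $c$ to the right are pinned to that same local copy, where the induction hypothesis already forbids the configuration. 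Formalizing this invariant while checking that the construction still attains density $\Omega(n\,2^{\alpha(n)})$ is the heart of the lower bound; it may also be possible to short-circuit it by quoting a published sharp lower bound for the word $d\,a\,b\,a\,c$ (equivalently $x\,a\,y\,a\,z$) in the generalized Davenport--Schinzel framework~\cite{Pettie15-SIDMA}.

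\textbf{Upper bound.} The generic light-pattern estimate (\ref{eqn:general-upper-bound}) is far too weak here, since the smallest product pattern $P_\pi\Kron\hatpattern$ that contains $W$ already has $k=3$ and so yields only $n\,2^{\Theta(\alpha(n)^2)}$. Instead I would run a dedicated forbidden-$0$--$1$-matrix recursion tailored to the four-row pattern $W$, following the classical $\lambda_4$ upper bound: given a $W$-free $n\times n$ matrix, partition the columns into consecutive blocks, in each (row, block) cell keep only the leftmost and rightmost $1$, and charge every discarded ``interior'' $1$ to an occurrence of $\ababa$ confined to rows $1$--$3$ -- this is where the $\poly(\alpha(n))$ factor enters, via $\Ex(\ababa,n)=\Theta(n\alpha(n))$. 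The surviving ``boundary'' $1$s, together with the top-row guard of $W$, are then governed by an inverse-Ackermann recurrence of the usual shape, whose solution contributes exactly one further $2^{\alpha(n)}$ factor; optimizing the block size gives $\Ex(W,n)=O(n\,2^{\alpha(n)})$. Equivalently one may pass to the word $d\,a\,b\,a\,c$ and invoke the matching sequence upper bound $O(n\,2^{\alpha(n)})$, the reduction from a $W$-free matrix to a $(d\,a\,b\,a\,c)$-avoiding sequence of ``column blocks'' being routine once one absorbs the harmless repetitions created by columns with several $1$s.

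I expect the main obstacle to be the lower-bound construction and, above all, the proof that it is $W$-free: because $W$ is not an alternation but combines a single repeated symbol with three \emph{distinct} context symbols in a prescribed vertical order, the construction must simultaneously deny a left guard taller than everything else, a distinct symbol strictly between the two copies of $a$, and a third symbol to their right, while staying dense enough to reach $\Omega(n\,2^{\alpha(n)})$ -- a balance that leaves essentially no slack.
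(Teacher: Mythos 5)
Your overall plan --- read $W$ as the sequence pattern $41213$ (equivalently $dabac$), build a $\lambda_4$-style inverse-Ackermann construction for the lower bound, and use a recursion or a reduction to sequences for the upper bound --- matches the paper's strategy at the architectural level. Where it breaks is the lower bound, and it breaks for a reason you do not identify. The standard order-4 Davenport--Schinzel constructions that realize $\Omega(n2^{\alpha(n)})$ use only live blocks, and those constructions \emph{do} contain $41213$: as the paper's Remark following Lemma~\ref{lem:41213} spells out explicitly, $1\,2\,1\,2\,1$ can appear inside a copy of $\Ubot$ while a live block $(3\,4)$ from $\Utop$ is shuffled in, producing $1\,2\,3\,4\,1\,2\,1\,3$, which contains $41213$. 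Your informal invariant --- that the guard letter $d$ and the letter $c$ are "pinned to that same local copy" whenever the two copies of $a$ fall in one local copy --- is exactly what fails: $\Utop$-symbols are shuffled \emph{into} local copies, and in a pure-live-block construction they can enter with \emph{larger} magnitudes, so a shuffled-in symbol can serve as the guard "$4$." The paper's fix is a hybrid: first occurrences are confined to \emph{dead} blocks that are never shuffled into $\Ubot$ but only placed between consecutive copies of $\Ubot$ (a mechanism native to the order-3 / $ababa$ construction), while repeated occurrences still use the order-4 live-block shuffle. This is what gives Lemma~\ref{lem:U-properties}(2): any $\Utop$-symbol visible inside a copy of $\Ubot$ has \emph{smaller} magnitude than every symbol of that copy, so it can never play the role of "$4$." Without this device your construction is not $W$-free, and no citeable sharp sequence lower bound for $dabac$ seems to exist to short-circuit the issue (the paper would have quoted one rather than constructing $U(i,j)$ and proving Lemma~\ref{lem:41213} from scratch).

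On the upper bound, your charging argument as described yields $n\,\poly(\alpha(n))\,2^{\alpha(n)}$, i.e.\ $n\,2^{(1+o(1))\alpha(n)}$, rather than the sharp $O(n2^{\alpha(n)})$ asserted by the theorem. The paper obtains the sharp bound by citing \cite[Thm.~3.4]{Pettie-GenDS11}; your sequence-reduction suggestion is in the right direction but would need that result (or a self-contained re-derivation without the extra $\poly(\alpha(n))$ factor) to close the gap.
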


\subsection{Pattern-avoidance and the Dynamic Optimality Conjecture}  

The original \emph{dyanamic optimality conjecture}~\cite{SleatorT85}
states that the (online) \textsc{Splay} BST is $O(1)$-competitive
with the optimum offline BST, for any sequence with length $\Omega(n)$.
Today dynamic optimality usually refers to the conjecture that \emph{there exists} an $O(1)$-competitive BST, with \textsc{Greedy} / \textsc{GreedyFuture}~\cite{Lucas88,Munro00,DemaineHIKP09} and \textsc{Splay} being the foremost candidates.

It is an open problem to prove $o(\log n)$-competitiveness for \textsc{Splay} or \textsc{Greedy}, though some \emph{corollaries} of 
dynamic optimality have been proved~\cite{SleatorT85,Tar85,Cole00,ColeEtal00,IaconoL16,ChalermsookGJAPY23,LevyT19}.  Many corollaries of dynamic optimality 
can be characterized by forbidden patterns.
For example, the optimum BST executes all 
of these sequences in linear time.\footnote{Note that the problem we study in this paper is not on this list.  There is no published proof yet to the effect that ``accessing a $\pi$-avoiding $S$ in $O_\pi(n)$ time'' is a corollary of dynamic optimality.  The $O_\pi(n)$-height decision-tree implied by Fredman~\cite{Fredman76} is not obviously implementable as a dynamic binary search tree.}  
\begin{description}
    \item[Sequential.] The sequential access sequence $S=(1,2,\ldots,n)$ avoids $(2,1)$.
    \item[Deque.] In a deletion-only deque sequence, $S(i)$ is either the minimum or maximum of $\{S(i),S(i+1),\ldots,S(n)\}$.  Deque sequences avoid $\{(213),(312)\}$.  (In a deque, the
    accessed elements are also typically deleted from the tree~\cite{Sundar92,Pettie-Deque-08}.)
    \item[Preorder and Postorder.] Let $R$ be any BST over $\{1,\ldots,n\}$ and $S$ be a preorder (or postorder) traversal
    of $R$.  Then $S$ avoids $(231)$ (or $(312)$).
    (The \emph{Traversal Conjecture} of Sleator and Tarjan~\cite{SleatorT85} concerned preorder sequences. If the accessed elements in a preorder sequence are moved to the root and \emph{deleted}, yielding two trees, this corresponds with Lucas's definition of \emph{Split}-sequences~\cite{Lucas91}.)
    \item[$k$-Increasing.] $S$ can be decomposed into $(k-1)$ increasing subsequences, or equivalently, 
    $S$ avoids $(k,\ldots,2,1)$.
    \item[$k$-Recursively Decomposable.] A permutation $S$ is 
    $k$-recursively decomposable if 
    (i) the 1s of the corresponding permutation matrix $A_S$
    can be partitioned into $k$ non-overlapping rectangles, and 
    (ii) those rectangles are themselves $k$-recursively decomposable,
    where in the base case, any $1\times 1$ matrix is $k$-recursively 
    decomposable.  These sequences avoid all \emph{simple} $(k+1)$-permutations.\footnote{A $(k+1)$-permutation $\pi$ is \emph{simple} if there is no interval $I\subset \{1,\ldots,k+1\}$ with $|I|\in [2,k]$ such that $\pi(I) \bydef \{\pi(j) \mid j\in I\} = I$.}
\end{description}

Figure~\ref{fig:relation} shows the relationship between the classes of permutations, and Table~\ref{table:bounds} gives some known upper bounds on the performance of \textsc{Splay} and \textsc{Greedy}. In particular, our new upper bound on $\Ex(P_{\pi}\Kron\hatpattern,n)$ improves on the bounds for $k$-recursively decomposable sequences (when preprocessing is not allowed), and 
$k$-permutation avoiding sequences.

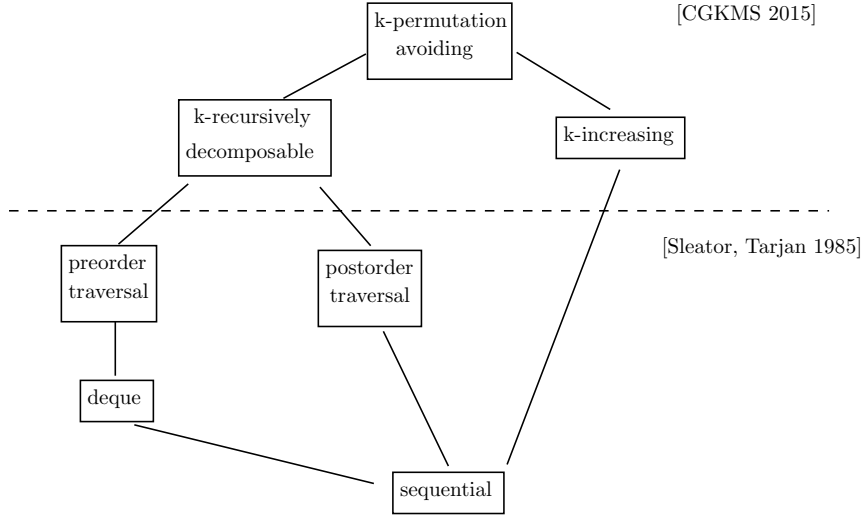
\begin{figure} 
\centering
\tikzset{every picture/.style={line width=0.75pt}} %set default line width to 0.75pt        

\scalebox{.8}{
\begin{tikzpicture}[x=0.75pt,y=0.75pt,yscale=-1,xscale=1]
%uncomment if require: \path (0,360); %set diagram left start at 0, and has height of 360

%Straight Lines [id:da6663048442416681] 
\draw    (253,41.05) -- (201,69.05) ;
%Straight Lines [id:da7023425937623524] 
\draw    (348,40.05) -- (407,76.05) ;
%Straight Lines [id:da6659098604992177] 
\draw    (141,123.05) -- (97,161.05) ;
%Straight Lines [id:da005546062279171204] 
\draw    (224,125.05) -- (256,162.05) ;
%Straight Lines [id:da3424302696549353] 
\draw    (95,210.05) -- (95,244.05) ;
%Straight Lines [id:da11031225260914934] 
\draw    (264,216.05) -- (305,301.05) ;
%Straight Lines [id:da6541803408812155] 
\draw    (107,275.05) -- (258,312.05) ;
%Straight Lines [id:da46710409197355385] 
\draw    (413,114.05) -- (342,300.05) ;
%Straight Lines [id:da5585137435670482] 
\draw  [dash pattern={on 4.5pt off 4.5pt}]  (28,140) -- (545,140) ;

% Text Node
\draw    (254,9) -- (345,9) -- (345,57) -- (254,57) -- cycle  ;
\draw (257,13) node [anchor=north west][inner sep=0.75pt]   [align=left] {{\small k-permutation}\\{\small  \ \ \ avoiding}};
% Text Node
\draw    (135,70) -- (231,70) -- (231,118) -- (135,118) -- cycle  ;
\draw (138,74) node [anchor=north west][inner sep=0.75pt]   [align=left] {\begin{minipage}[lt]{62.43pt}\setlength\topsep{0pt}
\begin{center}
{\small k-recursively }
\end{center}
{\small decomposable}
\end{minipage}};
% Text Node
\draw    (373,81) -- (454,81) -- (454,107) -- (373,107) -- cycle  ;
\draw (376,85) node [anchor=north west][inner sep=0.75pt]   [align=left] {{\small k-increasing}};
% Text Node
\draw    (61,162) -- (121,162) -- (121,210) -- (61,210) -- cycle  ;
\draw (64,166) node [anchor=north west][inner sep=0.75pt]   [align=left] {{\small preorder}\\{\small traversal}};
% Text Node
\draw    (223,165) -- (288,165) -- (288,213) -- (223,213) -- cycle  ;
\draw (226,169) node [anchor=north west][inner sep=0.75pt]   [align=left] {\begin{minipage}[lt]{41.51pt}\setlength\topsep{0pt}
\begin{center}
{\small postorder}\\{\small traversal}
\end{center}

\end{minipage}};
% Text Node
\draw    (73,247) -- (119,247) -- (119,273) -- (73,273) -- cycle  ;
\draw (76,251) node [anchor=north west][inner sep=0.75pt]   [align=left] {{\small deque}};
% Text Node
\draw    (270,305) -- (340,305) -- (340,331) -- (270,331) -- cycle  ;
\draw (273,309) node [anchor=north west][inner sep=0.75pt]   [align=left] {{\small sequential}};
% Text Node
\draw (447,8) node [anchor=north west][inner sep=0.75pt]   [align=left] {{\small [CGKMS 2015]}};
% Text Node
\draw (438,155) node [anchor=north west][inner sep=0.75pt]   [align=left] {{\small [Sleator, Tarjan 1985]}};
\end{tikzpicture}}
\caption{Relation between classes of search sequences. The upper class contains the lower ones. } 
\label{fig:relation}
\end{figure} 

\begin{table}
\centering\scalebox{.95}{\begin{tabular}{|l|l|l|l|l|}
\multicolumn{1}{l}{\textsf{Search Sequence}}
& \multicolumn{1}{l}{\textsf{Forbidden Pattern}}
& \multicolumn{1}{l}{\textsf{Splay}}         
& \multicolumn{1}{l}{\textsf{Greedy}}                                                 
& \multicolumn{1}{l}{\textsf{Citation}} \\ \hline
Sequential & $(21)$-free & $O(n)$      & $O(n)$     & \cite{Tar85}\\ \hline
Deque      & $\{(213),(231)\}$-free    & $O(n \alpha^*(n))$ & $O(n \alpha (n)) $  &  \cite{Pettie-Deque-08,ChalermsookGJAPY23}\\ \hline
Preorder   & $(231)$-free  & ---       & $O(n 2^{\alpha(n)})$  & \cite{ChalermsookGJAPY23}\\ \hline
Postorder  & $(312)$-free  & ---       & $O(n)$  & \cite{ChalermsookGJAPY23}\\ \hline
$k$-Increasing & $(k, \ldots,2,1)$-free & --- & $O(\min \{nk^2, n k \alpha(n)\})$ & \cite{ChalermsookGJAPY23} \\ \hline
$k$-Recursively & avoids all \emph{simple}   &      & $O(n\log k)$ &\\
decomposable    & $(k+1)$-permutations & \rb{2.5}{---}  & (\emph{prepr.~initial tree}) & \rb{2.5}{\cite{GoyalG19}} \\ \hline
$k$-Permutation &             &      & &\\
avoiding        & \rb{2.5}{$\pi$-free}  & \rb{2.5}{---}  & \rb{2.5}{$O(\Ex(P_{\pi}\Kron\hatpattern,n))$} & \rb{2.5}{\cite{ChalermsookGKMS15}}\\ \hline
\end{tabular}}
\caption{\label{table:bounds}Upper Bounds on Structured Search Sequences}
\end{table}

\subsection{Organization}

In Section~\ref{sect:forbidden-matrix-prelims} we
review forbidden 0--1 matrix terminology, and some 
key results.  In Section~\ref{sect:upper-bound} we 
prove Theorem~\ref{thm:main}, establishing 
the $n2^{(1+o(1))\alpha(n)}$ upper bound 
on $P\Kron\hatpattern$-type matrices.
In Section~\ref{sect:lower-bound} we prove 
Theorem~\ref{thm:main-lower-bound}'s $\Omega(n2^{\alpha(n)})$
lower bound on $W$-free matrices.
Section~\ref{sect:additional-upper-bounds} presents
some additional upper bounds, on $I_k\Kron\hatpattern$-free matrices (Theorem~\ref{thm:identity-product}) and 
matrices avoiding $W$ and its reflection.
We conclude with some open problems in Section~\ref{sect:conclusion}.

\section{Preliminaries}\label{sect:forbidden-matrix-prelims}

Let $A\in\{0,1\}^{n\times m}$ 
and $P\in\{0,1\}^{k\times l}$.  The \textit{weight} of $A$, denoted as $\|A\|_1$, is the number of 1's in $A$. 
We say $P$ is \textit{contained} in $A$, written $P\prec A$ 
if there are row indices $r_1 < \cdots < r_k$
and column indices $c_1 < \cdots < c_l$ 
such that $P(i,j)=1 \rightarrow A(r_i,c_j)=1$.
In other words, you can obtain $P$ from $A$ 
by deleting rows and columns, and flipping 
some 1s to 0.
The extremal functions are defined as follows.
\begin{align*}
\Ex(P,n,m) &= \max\{\|A\|_1 \mid A\in\{0,1\}^{n\times m}, P\nprec A\},\\
\Ex(P,n) &= \Ex(P,n,n).
\end{align*}

If $P$ is a $k\times k$ 
permutation matrix, 
it is known that both $\Ex(P,n)$ 
and $\Ex(P\Kron \oneone,n)$ are $O_k(n)$,
but we will be interested in the leading constants as well.

\begin{theorem}[Marcus and Tardos~\cite{MarcusT04}, Geneson~\cite{Geneson09}, Fox~\cite{Fox13}, Cibulka and Kyncl~\cite{CibulkaK17}, Geneson~\cite{GenesonPhD15}, Geneson and Tian~\cite{GenesonT17}]\label{thm:perm-doubleperm-constant}
Let $P$ be any permutation matrix. Then there exists constants $C_k,C_k' \leq 2^{(4+o(1))k}$ such that
\begin{align*}
    \Ex(P,n,m) &\leq C_k(n+m),\\
    \Ex(P\Kron \oneone, n,m) &\leq C_k'(n+m).
\end{align*}
\end{theorem}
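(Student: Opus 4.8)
\medskip
\noindent\textbf{Proof proposal.}
Since this statement is a compilation of several papers rather than a single result, the plan is to assemble it, proving the two inequalities in turn and then passing from square to rectangular host matrices. I would treat $\Ex(P,n,m)\le C_k(n+m)$ first. The backbone is the Marcus--Tardos block recursion \cite{MarcusT04}: given an $n\times m$ matrix $A$ avoiding the $k\times k$ permutation matrix $P$, partition $A$ into blocks of side $s$ (a parameter to be tuned) and form the contracted matrix $\hat A$ whose entry is $1$ exactly on the nonempty blocks. The one clean structural fact is that $\hat A$ again avoids $P$: a copy of $P$ in $\hat A$ lifts to a copy in $A$ by picking an arbitrary $1$ inside each chosen block, since distinct block-rows make the row coordinates increasing and distinct block-columns make the column ranks agree with $\pi$. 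Hence at most $\Ex\big(P,\lceil n/s\rceil,\lceil m/s\rceil\big)$ blocks are nonempty. It remains to bound the $1$s per block: a block with fewer than $k$ occupied rows and fewer than $k$ occupied columns carries at most $(k-1)^2$ ones, and a pigeonhole argument over the $\binom{s}{k}$ possible $k$-subsets of occupied lines---the technical heart of \cite{MarcusT04}---limits the number of the remaining (``dense'') blocks. With $s=\Theta(k^2)$ this yields a recursion of shape $\Ex(P,n,m)\le a\,\Ex(P,n/s,m/s)+b\,(n+m)$ with $a<s$ and $b=k^{O(1)}\binom{s}{k}$, which unrolls to $\Ex(P,n,m)\le C_k(n+m)$ with $C_k=O(b)=2^{O(k\log k)}$.

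Driving the constant down to $2^{(4+o(1))k}$ is where the real work lies, and I expect it to be the main obstacle, since it is a matter of intricate bookkeeping rather than a new idea. The spurious $\log k$ in the exponent is produced entirely by the $\binom{s}{k}$-fold pigeonhole when the block side $s$ is polynomial in $k$; Fox's refinement \cite{Fox13} eliminates it, replacing the crude classification of blocks by a more economical one and arranging a recursion whose multiplicative overhead is only $2^{O(k)}$, so that $C_k=2^{O(k)}$. I would import this essentially verbatim and then layer on the optimized analysis of Cibulka and Kyncl \cite{CibulkaK17}, which tunes the block side, the ``dense'' thresholds, and the way the recursion is solved to pin the leading constant at $4+o(1)$; this is the step I would spend the most care on.

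For the second inequality I would reduce the doubled pattern $P\Kron\oneone$ (each $1$ of $P$ replaced by a vertical domino) to the single pattern, following Geneson \cite{Geneson09}. Given an $n\times m$ matrix $A$ avoiding $P\Kron\oneone$, one forms a reduced matrix by ``merging'' consecutive $1$s within each column---keeping, say, every second $1$ of each column, or contracting pairs of adjacent rows---so arranged that a copy of $P$ in the reduced matrix lifts to a copy of $P\Kron\oneone$ in $A$ (the vertical domino supplied by the two $1$s that were merged). This costs only a constant factor in the number of $1$s together with an additive $O(n+m)$, giving $\Ex(P\Kron\oneone,n,m)=O\big(\Ex(P,n,m)\big)+O(n+m)$ and hence $C_k'=2^{O(k)}$; the sharpening to $C_k'\le 2^{(4+o(1))k}$ comes from the refined analyses of Geneson's thesis \cite{GenesonPhD15} and Geneson--Tian \cite{GenesonT17}, who re-run the Marcus--Tardos recursion directly for doubled matrices with optimized parameters.

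Finally, I would deduce the rectangular bounds from the square ones by a one-line slicing argument: assuming without loss of generality $n\le m$, cut the $n\times m$ host into $\lceil m/n\rceil$ blocks of width $n$, apply the square bound $\Ex(P,n,n)\le C_k n$ to each, and sum, obtaining $\Ex(P,n,m)\le\lceil m/n\rceil\,C_k n\le C_k(n+m)$; the identical argument handles $P\Kron\oneone$. Everything outside the Fox/Cibulka--Kyncl and Geneson--Tian constant optimizations is routine.
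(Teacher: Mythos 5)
The paper states this theorem as a compilation of citations and gives no proof of its own, so the question is whether your reconstruction from the literature is sound. Your outline for the first inequality --- the Marcus--Tardos block recursion giving $C_k = 2^{O(k\log k)}$, Fox's refinement bringing this to $2^{O(k)}$, Cibulka--Kyncl tuning the constant in the exponent, and the slicing argument to pass from square to rectangular host matrices --- is correct and tracks the cited sources faithfully.

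The second inequality has a genuine gap. You propose to reduce $P\Kron\oneone$-avoidance to $P$-avoidance by merging consecutive $1$s within each column of $A$, claiming that a copy of $P$ in the contracted matrix $\tilde A$ lifts to a copy of $P\Kron\oneone$ in $A$ via the merged pairs. That claim is false: the two $1$s of each domino in $P\Kron\oneone$ must occupy \emph{consecutive} positions among the $2k$ chosen rows, so the $k$ merged pairs must be pairwise non-overlapping intervals, and your contraction does not enforce this. Concretely, take $P = I_2$ and let column $1$ of $A$ have $1$s at rows $1$ and $10$ while column $2$ has $1$s at rows $5$ and $20$. Retaining the upper $1$ of each column gives a $\tilde A$ containing $I_2$ at rows $10 < 20$, but $A$ does \emph{not} contain $I_2\Kron\oneone$ since there is no ordering $\rho_1<\rho_2<\rho_3<\rho_4$ with $\{\rho_1,\rho_2\}\subseteq\{1,10\}$, $\{\rho_3,\rho_4\}\subseteq\{5,20\}$, and $\rho_2<\rho_3$. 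Geneson's result that doubled permutation matrices have linear extremal function is not a corollary of Marcus--Tardos via such a contraction; it is a standalone Marcus--Tardos-style block argument with a more delicate block classification. You do mention at the end that the constant optimization ``re-runs the Marcus--Tardos recursion directly for doubled matrices,'' which is the correct route, but the contraction you lead with gives no support even for the weaker $C_k'=2^{O(k)}$ claim you attribute to it.
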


\section{The Upper Bound}\label{sect:upper-bound}

\subsection{Establishing the General Recurrence}

Let $P$ be a $k\times k$ permutation matrix and
$Q = P \Kron \hatpattern$ 
be the $2k\times 3k$ forbidden pattern. 
Define $Q_{a,b}$ to be the 
$2k\times (3k-(a+b))$ matrix derived from $Q$ 
by removing the first $a$ and last $b$ columns.  
For reasons that will become clear later, 
we must redefine the \emph{contains} relation 
$\prec$ differently for the $Q_{a,b}$ matrices.

\begin{definition}\label{def:Qprec}
We will say that $Q_{a,b}\prec A$ 
if there are 
$2k$ rows $r_1 < \cdots < r_{2k}$ and 
$3k-a-b$ columns $c_1< \cdots < c_{3k-a-b}$
such that
\begin{itemize}
\item $Q_{a,b}(i,j)=1$ implies $A(r_i,c_j) = 1$
\item If $\forall j.\; Q_{a,b}(i,j)=0$ 
then $\exists j'.\; A(r_i,j')\neq 0$.  
In other words, an all-0 row $Q_{a,b}(i,\cdot)$ 
cannot match an all-0 row of $A$.
(Note that $j'$ need not be in $\{c_1,\ldots,c_{3k-a-b}\}$.)
\end{itemize}
\end{definition}

Let $A$ be an $n\times m$ $Q_{a,b}$-free 
matrix with weight $\Ex(Q_{a,b},n,m)$.  
We will classify all 1s in $A$ according to the 
following taxonomy, and bound the number of 1s
in each class directly or inductively.

\centerline{
\scalebox{.7}{
\begin{tabular}{cccccc}
&&\\
\multicolumn{4}{c}{All 1s}\\
\zero{\hcm[.5]\rotatebox{30}{\rule{3cm}{0.5mm}}}
\zero{\hcm[3.5]\reflectbox{\rotatebox{30}{\rule{3cm}{0.5mm}}}}
\\
Local &&& Global\\
\zero{\hcm[4.5]\rotatebox{40}{\rule{1.8cm}{0.5mm}}}
\zero{\hcm[6.3]\rotatebox{90}{\rule{1.2cm}{0.5mm}}}
\zero{\hcm[6.6]\reflectbox{\rotatebox{40}{\rule{1.8cm}{0.5mm}}}}
\\
&& First & Middle & Last\\
\zero{\hcm[4.5]\rotatebox{40}{\rule{1.8cm}{0.5mm}}}
\zero{\hcm[6.6]\reflectbox{\rotatebox{40}{\rule{1.8cm}{0.5mm}}}}\\
&& Light && Heavy\\
\zero{\hcm[2.3]\rotatebox{40}{\rule{1.8cm}{0.5mm}}}
\zero{\hcm[4.1]\rotatebox{90}{\rule{1.2cm}{0.5mm}}}
\zero{\hcm[4.4]\reflectbox{\rotatebox{40}{\rule{1.8cm}{0.5mm}}}}
\\
& Light-first & Light-middle & Light-last\\
&&\\
\end{tabular}
}
}

Partition $A$ into \emph{slabs} of $B$ consecutive columns.
A row is called \emph{local} if it has a non-zero 
intersection with exactly one slab and \emph{global} 
otherwise.  The 1s in local/global rows are themselves local/global.
Let $n_i$ be the number of rows local to slab $i$ and 
$n^*$ be the number of global rows, so $n=n^*+\sum_i n_i$.

Suppose $A(r,c)=1$ is a 1 appearing in a global row $r$
and slab $s = \ceil{c/B}$.  We classify this 1
as \emph{first} if the intersection of 
row $r$ and slabs $1,\ldots,s-1$ are zero,
\emph{last} if the intersection of row $r$ and slabs $s+1,\ldots,\ceil{m/B}$ is zero, and \emph{middle} otherwise.

Since each slab is itself 
$Q_{a,b}$-free, the total number 
of local 1s is at most 
\begin{equation}\label{eqn:local}
\sum_{i=1}^{\ceil{m/B}} \Ex(Q_{a,b},n_i,m_i),
\end{equation} 
where $m_i$ is the number of columns
in slab $i$, 
which is exactly $B$ except perhaps the last slab.
Similarly, 
if $\Afirst$ and $\Alast$ are the matrices of 
first 1s and last 1s, then each slab of $\Afirst$ 
is $Q_{a,b+1}$ free, and each slab of $\Alast$ is $Q_{a+1,b}$-free; see Figure~\ref{fig:first-ones}. 
Letting $n_i^f$ ($n_i^l$) be the number of rows
with first (last) 1s in slab $i$, we can
upper bound first and last 1s as follows.

\begin{align}
    \|\Afirst\|_1+\|\Alast\|_1 
    &\leq \sum_{i=1}^{\ceil{m/B}} \left(\Ex(Q_{a,b+1},n_i^f,m_i) + \Ex(Q_{a+1,b},n_i^l,m_i)\right)\nonumber\\
    &\leq \Ex(Q_{a,b+1},n^*,m-m_{\ceil{m/B}}) + \Ex(Q_{a+1,b},n^*,m-m_1).\label{eqn:firstlast}
\end{align}

\begin{figure}
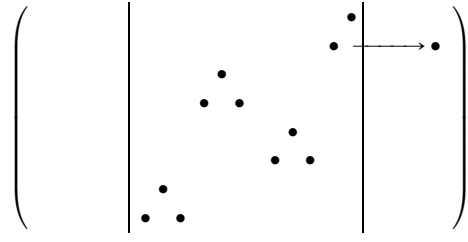

    \centering
    \[
    \scalebox{.8}{$\left(\begin{array}{c@{\hcm[1.5]}|c@{\hcm[.1]}c@{\hcm[.1]}c@{\hcm[.2]}c@{\hcm[.1]}c@{\hcm[.1]}c@{\hcm[.4]}c@{\hcm[.1]}c@{\hcm[.1]}c@{\hcm[.2]}c@{\hcm[.1]}c@{\hcm[.1]}|@{\hcm[1.1]}c}
    &&&&&&&&&&&\bu&\\
    &&&&&&&&&&\bu&\zero{\xrightarrow{\hcm}}&\bu\\
    &&&&&\bu&&&&&&&\\
    &&&&\bu&&\bu&&&&&&\\
    &&&&&&&&\bu&&&&\\
    &&&&&&&\bu&&\bu&&&\\
    &&\bu&&&&&&&&&&\\
    &\bu&&\bu&&&&&&&&&
    \end{array}\right)$}
    \]
    \caption{Vertical lines mark the boundary of some slab.  
    If $Q_{0,1}$ appears in one slab of $\Afirst$, 
    then there must be an occurrence of $Q=Q_{0,0}$ in $A$.}
    \label{fig:first-ones}
\end{figure}

In Eqn.~(\ref{eqn:firstlast}) 
we use the \emph{superadditivity} of $\Ex$ to simplify
the expression.  For any $R$, 
$\Ex(R,n_1,m_1)+\Ex(R,n_2,m_2)\leq \Ex(R,n_1+n_2,m_1+m_2)$.
Note that $\sum_i n_i^f = \sum_i n_i^l = n^*$ and that the first
and last slabs contain no last 1s and first 1s, respectively.

Let $A^*$ be the $n^* \times m$ matrix formed by 
the global rows and containing only the middle 1s.
We partition the rows of $A^*$ into horizontal slabs
of $G$ rows each, so the intersections of the horizontal and vertical slabs induce $G\times B$ \emph{blocks}.
Call a $G\times B$ block in $A^*$ 
\emph{heavy} if it contains a $\hatpattern$, and \emph{light} otherwise.
The middle 1s inside heavy/light blocks are themselves called heavy/light.
Let $\Aheavy$ and $\Alight$ be the $n^*\times m$ matrices 
containing heavy and light 1s, respectively.
In a light block, the first 1 and last 1 of each row are called \emph{light-first} and \emph{light-last}, and all other 1s in the row are \emph{light-middle}.

Define $\Aheavy^{\contract}$ to be the $n^*/G \times m/B$
matrix obtained by \emph{contracting} each block in $\Aheavy$
to a single entry, i.e., non-zero blocks become 1 and 
all-zero blocks become 0.
Because each heavy block contains
a $\hatpattern$, $\Aheavy^{\contract}$ is $P$-free, 
implying $\|\Aheavy^{\contract}\|_1$ 
(the number of heavy blocks)
is at most $\Ex(P,n^*/G,m/B)$.
Since each heavy block consists solely of middle 1s,
each is $Q_{a+1,b+1}$-free; see Figure~\ref{fig:middle}.  Thus,

\begin{figure}
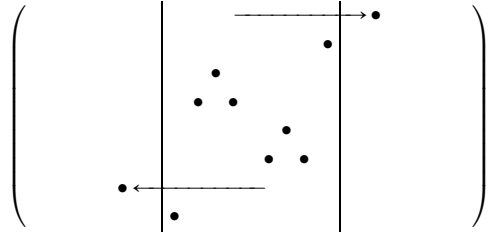

    \centering
\[
\scalebox{.8}{$\left(\begin{array}{@{\hcm[1.5]}c@{\hcm[.5]}|@{\hcm[.1]}c@{\hcm[.2]}c@{\hcm[.1]}c@{\hcm[.1]}c@{\hcm[.4]}c@{\hcm[.1]}c@{\hcm[.1]}c@{\hcm[.2]}c@{\hcm[.1]}|@{\hcm[.5]}c@{\hcm[1.5]}}
    &&&&\zero{\xrightarrow{\hcm[2]}}&&&&&\bu\\
    &&&&&&&&\bu&\\
    &&&\bu&&&&&&\\
    &&\bu&&\bu&&&&&\\
    &&&&&&\bu&&&\\
    &&&&&\bu&&\bu&&\\
    \bu\,\zero{\xleftarrow{\hcm[2]}}&&&&&&&&&\\
    &\bu&&&&&&&&
    \end{array}\right)$}
\]
    \caption{If an instance of $Q_{2,2}$ is contained in a single slab of middle 1s (e.g., $\Aheavy$ or $\Alight$), then $Q_{1,1}$ must also appear in $A$. 
    This inference relies on how \emph{contains}
    is defined for $Q_{a,b}$ matrices in Definition~\ref{def:Qprec}.  In particular, it is critical that all-zero rows of $Q_{2,2}$ must \emph{not} be all-zero in the instance of middle 1s.}
    \label{fig:middle}
\end{figure}

\begin{align}
    \|\Aheavy\|_1 &\leq \Ex(P,n^*/G,m/B)\cdot \Ex(Q_{a+1,b+1},G,B).\label{eqn:heavy}
\end{align}

Let $\Alight^{\contract}$ 
be obtained by contracting the $B$ columns in each 
slab of $\Alight$ to a single column.  
$\Alight^{\contract}$ inherits the $Q_{a,b}$-freeness 
of $\Alight$ and $A$, so the contribution 
of light 1s in the light-first and light-last categories
is at most
\begin{align}
    2\|\Alight^{\contract}\|_1 \leq 2\Ex(Q_{a,b},n^*,m/B).\label{eqn:lightfirstlast}
\end{align}
What remains is to bound the light 1s in the light-middle
category.  Construct an $n^*/G\times m/B$ matrix
$\Amidmid$ by the following procedure, 
which is similar to that of~\cite{Geneson09}.  
Assume the rows of $\Amidmid$ 
are numbered from bottom to top.
For each $i$ independently, 
scan the blocks in slab $i$ that contain light-middle 1s
from bottom to top, setting $\Amidmid(\ell_0,i)=\Amidmid(\ell_1,i)=\cdots =1$
according to the following rules.  See Figure~\ref{fig:chunk}.
\begin{enumerate}
    \item $(\ell_0,i)$ is the first block in slab $i$ containing a light-middle 1.
    \item $\ell_j>\ell_{j-1}$ is the first 
    index such
    that some column in blocks $(\ell_{j-1},i),\ldots,(\ell_j,i)$ 
    contains two light-middle 1s.
\end{enumerate}

\newcommand{\xdownarrow}[1]{%
  {\left\downarrow\vbox to #1{}\right.\kern-\nulldelimiterspace}
}

\begin{figure}
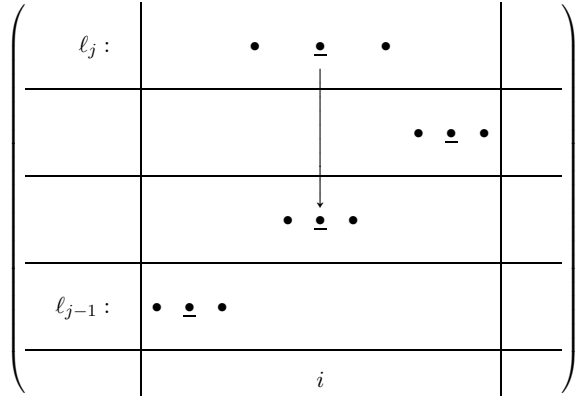

    \centering
\[
\scalebox{.8}{$\left(
\begin{array}{@{\hcm[.5]}r@{\hcm[.5]}|ccccccccccc|@{\hcm[1]}}
        &&&&&&&&&&&\\
\ell_j:  &&&&\bu&&\mbox{\underline{$\bu$}}&&\bu&&&\\
        &&&&&&&&&&&\\\hline
        &&&&&&&&&&&\\
        &&&&&&&&&\bu&\mbox{\underline{$\bu$}}&\bu\\
        &&&&&&&&&&&\\\hline
        &&&&&&&&&&&\\
        &&&&&\bu&\mbox{\underline{$\bu$}}\zero{\hcm[-.23]\rb{13.5}{$\xdownarrow{1.3cm}$}}&\bu&&&&\\
        &&&&&&&&&&&\\\hline
        &&&&&&&&&&&\\
\ell_{j-1}: &\bu&\mbox{\underline{$\bu$}}&\bu&&&&&&&&\\
        &&&&&&&&&&&\\\hline
        &&&&&&i&\istrut{6}&&&&\\
\end{array}
\right)$}
\]
    \caption{Vertical and horizontal lines mark block boundaries.  Underlined 1s are light-middle 1s.}
    \label{fig:chunk}
\end{figure}

Call the interval of 
blocks $(\ell_{j-1},i),\ldots,(\ell_j-1,i)$ 
in $\Alight$ (i.e., excluding $(\ell_j,i)$) 
a \emph{chunk}.  
By construction, 
the intersection of a column and a chunk can contain
at most one light-middle 1.  
(Note that no light block contains two light-middle 1s in the same
column, for otherwise it would contain a $\hatpattern$
pattern and be classified as heavy.)
We claim $\Amidmid$ is $P\Kron \oneone$-free, 
and therefore the number of light-middle 1s in $\Alight$
is, by superadditivity, at most
\begin{align}
    B\cdot \|\Amidmid\|_1 \leq B\cdot \Ex(P\Kron\oneone, n^*/G,m/B) 
    \leq \Ex(P\Kron\oneone, Bn^*/G,m).\label{eqn:lightmiddle}
\end{align}
Consider an occurrence of $\oneone$ in $\Amidmid$,
say $\Amidmid(\ell_j, i) = \Amidmid(\ell_{j'},i)=1$.
By construction they lie in different chunks, thus
there must be a column in slab $i$ of $\Alight$ 
that contains two light-middle 1s in blocks $(\ell_j,i),\ldots,(\ell_{j'},i)$ inclusive. 
Together with a light-first and light-last 1, this forms
a $\hatpattern$ pattern.  Thus, any occurrence
of $P\Kron\oneone$ in $\Amidmid$ implies an occurrence
of $Q = P\Kron\hatpattern$ in $\Alight$,
contradicting the fact that $\Alight$ is $Q_{a,b}$-free.

Combining Eqns.~(\ref{eqn:local},\ref{eqn:firstlast},\ref{eqn:heavy},\ref{eqn:lightfirstlast},\ref{eqn:lightmiddle}), we arrive at a recursive upper bound on $\Ex(Q_{a,b},n,m)$.
\begin{align}
    \Ex(Q_{a,b},n,m) &\leq 
    \sum_{i=1}^{\ceil{m/B}} \Ex(Q_{a,b},n_i,m_i)  & \mbox{local 1s}\nonumber\\
    &\hcm[.5] +\Ex(Q_{a,b+1},n^*,m-m_{\ceil{m/B}}) + \zero{\Ex(Q_{a+1,b},n^*,m-m_1)} \hcm[3.5] & \mbox{first and last 1s}\nonumber\\
    &\hcm[.5] +\Ex(P,n^*/G,m/B)\cdot \Ex(Q_{a+1,b+1},G,B)\hcm[1.2] & \mbox{heavy middle 1s}\nonumber\\
    &\hcm[.5] +2\Ex(Q_{a,b},n^*,m/B) & \mbox{light-first/-last 1s}\nonumber\\
    &\hcm[.5] +\Ex(P\Kron\oneone, Bn^*/G, m).  & \mbox{light-middle 1s}\label{eqn:main-rec}
\end{align}

\subsection{Analysis of The Recurrence}

\begin{lemma}\label{lem:base-case-linear}
Let $t=3k-(a+b)$ be the number of 1s in $Q_{a,b}$.  If $t=3$ then 
$\Ex(Q_{a,b},n,m) \leq 2n + (2k-1)(m-2)$
and if $t=2$ then 
$\Ex(Q_{a,b},n,m) \leq n + (2k-1)(m-1)$.
\end{lemma}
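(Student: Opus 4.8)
The plan is to strip the all-zero rows, reduce $Q_{a,b}$-avoidance to the avoidance of a small \emph{core} subpattern subject to prescribed row-gaps, and then bound the resulting constrained matrix by a zone/slab decomposition together with two elementary density bounds. Concretely, let $A$ be $n\times m$, $Q_{a,b}$-free, of weight $\Ex(Q_{a,b},n,m)$, and let $A'\in\{0,1\}^{n'\times m}$ be $A$ with its all-zero rows deleted, so $n'\le n$, $\|A'\|_1=\|A\|_1$, and every row of $A'$ is nonzero. Since $Q_{a,b}$ keeps only $t\in\{2,3\}$ consecutive columns of the light matrix $Q=P_\pi\Kron\hatpattern$, it has exactly $t$ ones, and inspecting the hat together with $P_\pi$ shows that the \emph{core} $C$ (the $s\times t$ matrix left after also deleting the all-zero rows of $Q_{a,b}$) is: for $t=2$, a $2\times2$ permutation matrix ($\nearrow$ or $\searrow$, so $s=2$); for $t=3$ with the three surviving columns forming one column-block of $Q$, the hat $\hatpattern$ ($s=2$, and its two rows are then adjacent, so the middle gap is $0$); and for $t=3$ otherwise, a $3\times3$ permutation matrix ($s=3$; up to row/column reversal $I_3$ or $\twoonethree$), in which case two of the three core rows always come from one column-block of $Q$ and hence are adjacent, so one of the two internal gaps is $0$. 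Because all rows of $A'$ are nonzero, Definition~\ref{def:Qprec} says precisely that $A'$ contains no copy of $C$ in rows $\rho_1<\dots<\rho_s$ with $\rho_1>g_0$, $\rho_{i+1}-\rho_i>g_i$, $n'-\rho_s\ge g_s$, where $g_0,\dots,g_s\ge0$ count the all-zero rows of $Q_{a,b}$ below, between, and above its core rows and $\sum_i g_i=2k-s$. If $n'\le 2k-1$ there is no room for such an embedding, $A'$ is unconstrained, and (for $m\ge t$) one checks elementarily that $n'm\le(t-1)n+(2k-1)(m-(t-1))$ whenever $n'\le 2k-1$, with equality when $A'$ is full and $n'=2k-1$; so from now on assume $n'\ge 2k$.

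I plan to use two elementary density bounds. \textbf{Bounded-span:} an $N\times m$ $0$--$1$ matrix with no monotone pair of a fixed orientation ($\nearrow$ or $\searrow$) and vertical span $\ge d$ has at most $N+d(m-1)$ ones; indeed, if $h_c$ and $\ell_c$ are the highest and lowest $1$ of column $c$, avoidance forces $h_c\le\ell_{c+1}+(d-1)$, and $\sum_c(h_c-\ell_c+1)$ then telescopes to at most $N+d(m-1)$. \textbf{Hat:} $\Ex(\hatpattern,N,m)\le 2N+m-2$, because in a hat-free matrix any $1$ that is neither leftmost nor rightmost in its row must be the topmost $1$ of its column (otherwise a $1$ above it, together with the leftmost and rightmost $1$s of its row, forms a hat), so there are at most $m-2$ such interior ones and at most $2N$ others. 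Both are the $k=1$ specializations of the statement to be proved, which is reassuring.

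For the \textbf{clean cases} ($s=2$): partition the $n'$ rows of $A'$ into the zones of total height $2k-2$ (placed as dictated by $g_0,g_1,g_2$) and one complementary middle slab $Y$ of $M=n'-(2k-2)$ rows; the zones hold at most $(2k-2)m$ ones. Any $C$-copy lying inside $Y$ automatically has enough rows below and above, so $Y$ has no $C$-copy of span $\ge g_1+1$. When $t=2$, $C$ is a monotone pair and the bounded-span bound with $d=g_1+1$ gives $\|Y\|_1\le M+(g_1+1)(m-1)$; using $g_0+g_1+g_2=2k-2$, adding the zones collapses the total to exactly $n'+(2k-1)(m-1)\le n+(2k-1)(m-1)$. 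When $t=3$ with $C=\hatpattern$ we have $g_1=0$, so $Y$ is hat-free and the hat bound gives $\|Y\|_1\le 2M+m-2$, which together with the zones collapses to $2n'+(2k-1)(m-2)\le 2n+(2k-1)(m-2)$. The arithmetic is where the ``$-2$'' and the factor $2k-1$ are forced, so it must be carried out carefully, but it is routine.

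What remains --- and what I expect to be the \textbf{main obstacle} --- is $t=3$ with $C$ a $3\times3$ permutation matrix ($s=3$). Here a single middle slab fails: forbidding only the $C$-copies that place one row in each of three slabs leaves the central slab completely unconstrained and hence possibly dense, so the bound cannot be salvaged by splitting alone. My intended route is to exploit the structural fact established in the first step --- one internal gap vanishes, so two of the three core rows are adjacent --- to ``lock'' those two core points into a single super-point; the configuration then behaves like a monotone pair (handled by a variant of the bounded-span bound) plus one extra point forced strictly below and to one side, whose contribution is absorbed into the additive $2n$ and the constant term. The delicate part is doing this bookkeeping so that the constants come out exactly as $2n+(2k-1)(m-2)$ rather than as some cruder $O_k(n+m)$, since it is precisely the tight form of this base case that the recurrence in~(\ref{eqn:main-rec}) needs in order to yield the $n\cdot 2^{O(k^2)+(1+o(1))\alpha(n)}$ bound of Theorem~\ref{thm:main}.
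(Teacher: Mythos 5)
Your proposal takes a genuinely different route from the paper, and it is not complete. The paper's own proof is a short deletion argument with no case analysis on the shape of the core: starting from a $Q_{a,b}$-free $A$, delete the first and last $1$ in every row (at most $2n$ deletions), note that the first and last columns are now empty, and then delete the lowest $2k-1$ surviving $1$s in each of the remaining $m-2$ columns (at most $(2k-1)(m-2)$ deletions). If any $1$ survives at $(r,c)$ then column $c$ has at least $2k$ surviving-from-the-row-phase $1$s --- namely those $2k-1$ plus $(r,c)$ --- and each of these $2k$ rows is nonzero in $A$ and has a $1$ strictly left of $c$ and strictly right of $c$. One then matches the $s\in\{2,3\}$ core rows of $Q_{a,b}$ (whether $\hatpattern$, $\patternA$, $\patternB$, or the reflections you list) to three of these $2k$ rows --- using, for the core cell in the leftmost column, a left-neighbor $1$ of the chosen row; for the rightmost column, a right-neighbor; and for a middle column, the $1$ at column $c$ --- and lets the remaining rows supply the required all-zero-pattern rows. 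The same deletion handles all core shapes at once, so no case split is needed.

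Two concrete problems with your writeup. First, and most importantly, the $s=3$ case is left as a sketch: you explicitly flag it as ``the main obstacle,'' and the ``lock the adjacent pair into a super-point'' idea is not carried out. Since that case genuinely occurs (whenever the three kept columns straddle two column-blocks of $Q$), the proof as written does not establish the lemma. Second, even in your $s=2$ clean case the bookkeeping is internally inconsistent: with zones of total height $2k-2$ and $M=n'-(2k-2)$, your own bound $\|Y\|_1\le M+(g_1+1)(m-1)$ gives a total of $n'+(2k-1+g_1)(m-1)$, which exceeds the target when $g_1>0$ (and $g_1$ can be positive for $t=2$, e.g.\ when the two columns straddle two column-blocks). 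The fix is to put only $g_0+g_2$ rows into the outer zones and let $Y$ contain $n'-(g_0+g_2)$ rows, which makes the arithmetic close to exactly $n'+(2k-1)(m-1)$. Compared to the paper's argument, your decomposition is considerably more involved and forces you into a three-way case analysis whose hardest branch is unresolved; the paper's row/column deletion sidesteps all of this.
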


\begin{proof}
First consider $t=3$.
$Q_{a,b}$ contains only three 1s and 
either $2k-2$ or $2k-3$ all-zero rows.  
Those three 1s are equivalent to $\hatpattern$, $\patternA$, or $\patternB$.  Suppose $A$ is $Q_{a,b}$-free.  
Remove the first and last 1 in each row of $A$,
then remove the first $2k-1$ 1s in each of $m-2$ columns,
excluding the first and last, which are now all zero.
If any 1 remains, then there must have been 
an occurrence of $Q_{a,b}$ in $A$.
The $t=2$ case is proved similarly.
\end{proof}

\begin{lemma}\label{lem:base-case-nonlinear}
If $m\leq 2^j$, $\Ex(Q_{a,b},n,m) \leq 2^{t-2}n + (2k-1)j^{\max\{0,t-3\}}(m-2)$, where $t=3k-(a+b)$.
\end{lemma}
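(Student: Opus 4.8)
The plan is to induct on $j$, with the base case $j\le 1$ (i.e.\ $m\le 2$) handled by Lemma~\ref{lem:base-case-linear} together with the trivial observation that the claimed bound $2^{t-2}n + (2k-1)j^{\max\{0,t-3\}}(m-2)$ is at least as large as the linear bound there. For the inductive step, assume $m\le 2^j$ with $j\ge 2$ and apply the master recurrence~(\ref{eqn:main-rec}) with the slab width $B = \lceil m/2\rceil$ and some convenient choice of the horizontal block height $G$ (a constant, or $G=1$, should suffice since the non-linear growth in $n$ is already as small as it can be). Each slab then has at most $\lceil m/2\rceil \le 2^{j-1}$ columns, so every occurrence of $\Ex(Q_{\cdot,\cdot},\cdot,m_i)$ on the right-hand side falls under the inductive hypothesis at parameter $j-1$; and the side pattern $t$ only goes \emph{up} (to $t-1$ or $t-2$) in the first/last and heavy terms, while staying at $t$ in the local and light-first/-last terms. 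Since $2^{(t-1)-2}\le 2^{t-2}$ and $(j-1)^{\max\{0,(t-1)-3\}}\le j^{\max\{0,t-3\}}$, it is safe to uniformly apply the inductive bound with the ambient $t$ and $j$ in every term.

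The core of the argument is the bookkeeping on the two separate parts of the bound — the coefficient of $n$ and the coefficient of $(m-2)$ — checking that each reproduces itself. For the $n$-part: the local terms sum to $\sum_i 2^{t-2} n_i \le 2^{t-2}(n-n^*)$; the first/last terms contribute $2^{t-3} n^*$ each (the side pattern there has $t-1$ ones), i.e.\ $2^{t-2}n^*$ together; so the local and first/last $n$-contributions already total $2^{t-2}n$. The remaining terms — heavy middle, light-first/-last, light-middle — must therefore contribute only to the $(m-2)$-part, which is possible because they each carry either a factor like $m/B = O(1)$ (a constant number of columns after contracting slabs to single columns) or a Marcus--Tardos / double-permutation linear bound $\Ex(P,\cdot,\cdot)$ or $\Ex(P\Kron\oneone,\cdot,\cdot) = O_k(\cdot)$ that is linear in its dimensions; in all of these the number of columns is $O(m/B)=O(1)$ after the contraction, so they fold into an additive $O(k)\cdot m$ or $O(k)$ term. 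The delicate point is squeezing all of these, plus the extra copy of $(2k-1)(j-1)^{\max\{0,t-3\}}(m-2)$ picked up from each of the two first/last terms, into the single budget $(2k-1)j^{\max\{0,t-3\}}(m-2)$; this is exactly where the multiplicative gap between $(j-1)^{e}$ and $j^{e}$ (with $e=\max\{0,t-3\}\ge 1$ when $t\ge 4$) is spent. Concretely, $2(j-1)^{e} + (\text{lower order in }j) \le j^{e}$ holds for $e\ge 1$ and $j$ large, and for $j$ small one absorbs the slack into the constant.

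I expect the main obstacle to be precisely this last accounting for the $(m-2)$-coefficient when $t=3$: there $e = \max\{0,t-3\} = 0$, so $j^{e}=1$ and there is \emph{no} room to grow the coefficient across levels of the recursion — yet naively each of the two first/last terms would reproduce the full $(2k-1)(m-2)$ and the heavy/light terms add more on top. This is the reason the lemma is consistent only because at $t=3$ the first/last side patterns have $t=2$, whose coefficient of $(m-1)$ is exactly $(2k-1)$ (from Lemma~\ref{lem:base-case-linear}, which serves as the true base case for $t\le 3$ at \emph{any} $j$), and because at $t=3$ the heavy-middle term uses $Q_{a+1,b+1}$ with only one $1$ — giving an all-zero pattern whose extremal function is controlled separately, or simply a telescoping that does not accumulate. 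So for $t\le 3$ I would not run the recurrence at all: I would just invoke Lemma~\ref{lem:base-case-linear} directly, observing $2^{t-2}\in\{1,2\}$ and $j^{\max\{0,t-3\}}=1$ match its statement, and reserve the inductive recurrence argument for $t\ge 4$, where the gap $j^{t-3} - 2(j-1)^{t-3}>0$ (for $j\ge 2$, after possibly enlarging the implied constant in "$2k-1$" or restricting to $j\ge j_0$ and absorbing small $j$ into the constant) gives exactly the slack needed.
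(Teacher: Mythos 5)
Your overall plan matches the paper's: pick $B=\lceil m/2\rceil$, handle $t\le 3$ by Lemma~\ref{lem:base-case-linear}, and induct on $(j,t)$ using the local and first/last terms to carry the $n$- and $(m-2)$-budgets. But you miss the one observation that makes this plan actually close, and that omission creates a real gap. With $B=\lceil m/2\rceil$ there are only two slabs, so there are \emph{no middle 1s at all}: a middle 1 must have a nonempty earlier slab and a nonempty later slab, impossible when $\lceil m/B\rceil=2$. Hence the heavy, light-first/last, and light-middle terms of (\ref{eqn:main-rec}) are identically zero, and the paper simply writes the two-slab recurrence with only local, first, and last terms. You instead keep those terms and assert they ``fold into an additive $O(k)\cdot m$''; this is false. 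For instance, with $G$ constant and $G<2k$, $Q_{a+1,b+1}$ cannot fit into a $G\times B$ block at all (under Definition~\ref{def:Qprec} it needs $2k$ distinct row indices), so $\Ex(Q_{a+1,b+1},G,B)=GB$, and with $\Ex(P,n^*/G,2)\geq \Omega(n^*/G)$ the heavy term is $\Omega(n^*B)=\Omega(n^*m)$. Similarly the light-first/last term $2\Ex(Q_{a,b},n^*,m/B)=2\Ex(Q_{a,b},n^*,2)$ is, by the same inductive bound you are trying to prove, $\approx 2\cdot 2^{t-2}n^*$, which on its own overruns the entire $n$-budget you have already allocated to local and first/last. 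So as written the induction does not close; you need to explicitly discard all three of those terms by noting that the set of middle 1s is empty.

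There is also a small accounting slip on the $(m-2)$-coefficient: you claim ``$2(j-1)^{e}+\text{l.o.}\le j^{e}$,'' but that is false for, say, $e=1,\ j=3$. The actual contributions are $(j-1)^{t-3}$ from the local terms and $(j-1)^{t-4}$ from the two first/last terms combined (their pattern has $t-1$ ones, hence exponent $t-4$), giving $(j-1)^{t-3}+(j-1)^{t-4}=j(j-1)^{t-4}\le j^{t-3}$. This is the inequality the paper uses, and it holds for all $j\ge 2$, $t\ge 4$ with no asymptotic slack required. Your instinct that the $t\le 3$ cases must be handled separately by Lemma~\ref{lem:base-case-linear} is correct and matches the paper; the exponent $j^{\max\{0,t-3\}}=1$ there leaves no room to grow, but those cases never enter the recursion.
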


\begin{proof}
The cases $t\in\{2,3\}$ follow from Lemma~\ref{lem:base-case-linear},
so we may assume $t > 3$. 
We consider a simplified version of (\ref{eqn:main-rec}) 
in which $B=\ceil{m/2}$, i.e., $m_1=\ceil{m/2}$ and $m_2=\floor{m/2}$.
There are only two slabs, all 1s are classified as \emph{local}, \emph{first}, or \emph{last}, and we have
\begin{align*}
    \Ex(Q_{a,b},n,m) &\leq \sum_{i\in\{1,2\}} \Ex(Q_{a,b},n_i,m_i)
    + \Ex(Q_{a,b+1},n^*,m_1) + \Ex(Q_{a+1,b},n^*,m_2).
\intertext{Applying the inductive hypothesis to each term, this is at most}
&\leq 2^{t-2}(n_1 + n_2) + (2k-1)(j-1)^{t-3}\left(\ceil{m/2}-2+\floor{m/2}-2\right)\\
&\hcm[.5] + 2\cdot 2^{t-3}n^* + (2k-1)(j-1)^{t-4}\left(\ceil{m/2}-2+\floor{m/2}-2\right)\\
&= 2^{t-2}n + (2k-1)\left((j-1)^{t-3} + (j-1)^{t-4}\right)(m-4)\\
&\leq 2^{t-2}n + (2k-1)j^{t-3}(m-2).
\end{align*}
\end{proof}

We use the following version of 
Ackermann's function and its inverses.
\begin{align*}
    a_{1,j} &= 2^j  & \mbox{for $j\geq 1$,}\\
    a_{i,1} &= 2    & \mbox{for $i\geq 2$,}\\
    a_{i,j} &= w\cdot a_{i-1,w}, \; \mbox{where $w = a_{i,j-1}$.} & \mbox{for $i,j\geq 2$,}\\
\ &\ \\
\alpha(n,m) &= \min\{i : a_{i,j}\geq m, \mbox{where $j=\max\{3,\ceil{n/m}\}$}\}\\
\alpha(n) &= \alpha(n,n)
\end{align*}

Observe that in the table of Ackermann values, 
the 1st column is constant ($a_{i,1}=2$)
and the second merely exponential ($a_{i,2}=2^{i+1}$)
so we have to look to the third column to see 
Ackermann-type growth, which is why we set $j$ as 
$j=\max\{3,\ceil{n/m}\}$.

\begin{lemma}\label{lem:mu}
Fix a constant $c = 3k$.
Suppose $m\leq (a_{i,j})^c$.
Then
\[
\Ex(Q_{a,b},n,m)\leq 
\mu_{i,t}(n + (cj)^{\max\{0,t-3\}}(2k-1)(m-2)),
\]
where $t=3k-(a+b)$ and
$\mu_{i,t} = (2^{O(kt)} + O(i)^{t-2})2^{i}$.
\end{lemma}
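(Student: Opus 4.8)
The plan is to prove Lemma~\ref{lem:mu} by a double induction on $i$ and $t=3k-(a+b)$, unwinding the master recurrence~(\ref{eqn:main-rec}) with carefully chosen block dimensions $B$ and $G$. The base cases are already in hand: when $i=1$ we have $m\le 2^{cj}$, so Lemma~\ref{lem:base-case-nonlinear} gives $\Ex(Q_{a,b},n,m)\le 2^{t-2}n+(2k-1)(cj)^{t-3}(m-2)$, which is absorbed by taking $\mu_{1,t}\ge 2^{t-2}$; and when $t\le 3$ Lemma~\ref{lem:base-case-linear} handles it with $\mu_{i,t}=O(1)$. So I would assume $i\ge 2$ and $t\ge 4$ and $m\le (a_{i,j})^c$ with $j=\max\{3,\ceil{n/m}\}$.

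The heart of the argument is the choice $B=(a_{i-1,w})^c$ where $w=a_{i,j-1}$, so that $a_{i,j}=w\cdot a_{i-1,w}$ forces $m/B \le w^{?}$-type control; more precisely I want each vertical slab to have width $B$ with $\ceil{m/B}$ small enough that the matrices indexed by one smaller inverse-Ackermann value can be invoked on the ``contracted'' $m/B$-column matrices. For the horizontal parameter I would set $G$ so that $n^*/G$ is comparably small, e.g.\ $G=B$ or a similar polynomial relation, to keep the $G\times B$ blocks in the regime where the base-case bound (or a lower-$i$ hypothesis) applies to $\Ex(Q_{a+1,b+1},G,B)$. The five terms of~(\ref{eqn:main-rec}) then get bounded as follows: (i) the local term $\sum_i\Ex(Q_{a,b},n_i,m_i)$ is handled by applying the same $\mu_{i,t}$ bound to each slab (the $n_i$ sum to $n-n^*$) and using superadditivity of the additive $(m-2)$-type term across slabs, contributing $\mu_{i,t}(n-n^*)$ plus a clean column term; (ii) the first/last terms use the inductive hypothesis on $t-1$, i.e.\ $\mu_{i,t-1}$, on an $n^*\times(m-m_{\text{last/first}})$ matrix; (iii) the heavy-middle term is $\Ex(P,n^*/G,m/B)\cdot\Ex(Q_{a+1,b+1},G,B) \le C_k(n^*/G+m/B)\cdot[\text{base-case bound on }G\times B]$ via Theorem~\ref{thm:perm-doubleperm-constant}; (iv) the light-first/-last term $2\Ex(Q_{a,b},n^*,m/B)$ is where the drop in $i$ happens: since $m/B\le w = a_{i,j-1}$, wait---I need $m/B \le (a_{i,j-1})^c$, so I should pick $B=(a_{i-1,w})^c$ making $m/B \le w^c/1 = (a_{i,j-1})^c$, hence this term is $\le 2\mu_{i,j-1}(\cdots)$, and iterating the $j$-index down to $1$ is what produces the $2^i$ factor (each level doubles); (v) the light-middle term uses Theorem~\ref{thm:perm-doubleperm-constant} again on $P\Kron\oneone$ with an $n^*B/G\times m$ matrix, so $G$ must be chosen $\ge B$ (or $\gg B$) to make $n^*B/G\le n^*$ and absorb it.

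Assembling these, the recursion in the ``$n$-coefficient'' is roughly $\mu_{i,t} \le \mu_{i,t-1}$ (from first/last) $+\ 2\mu_{i,j\text{-chain}}$ (from light-first/-last, contributing the $2^i$) $+$ base-case contributions of size $2^{O(kt)}$ (from heavy-middle and light-middle, since those terms' $n$-coefficients come out to $C_k\cdot 2^{t-2}/G$-ish, and the $O(i)^{t-2}$ piece accumulates over the $O(i)$ levels of the $j$-chain), which solves to $\mu_{i,t}=(2^{O(kt)}+O(i)^{t-2})2^i$ exactly as claimed. The ``$m$-coefficient'' $(cj)^{t-3}$ grows by one power of $cj$ each time $t$ increases, matching Lemma~\ref{lem:base-case-nonlinear}; I would track it as a separate, easier induction since it only ever loses a $(j-1)\to j$ or $(cj-1)\to cj$ factor at each step.

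The main obstacle, and the step I'd spend the most care on, is making the parameter bookkeeping for $B$, $G$, and the $j$-chain fully consistent: I need $m/B\le(a_{i,j-1})^c$ to legitimately invoke the $(i,j-1)$ hypothesis on the light-first/-last term, I need $B\le(a_{i-1,w})^c$ so the $G\times B$ blocks fall under a lower-$i$ hypothesis for the heavy-middle and base-case terms, I need $G$ large enough that $Bn^*/G\le n^*$ for the light-middle term yet small enough that $n^*/G$ stays in range for the $\Ex(P,\cdot,\cdot)$ factor, and all of this must respect the definition $a_{i,j}=w\cdot a_{i-1,w}$ so that the slab count $\ceil{m/B}$ telescopes correctly. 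Once those inequalities are pinned down the algebra collapsing the five terms into the stated $\mu_{i,t}$ is routine, so the proof is really a matter of setting up the inductive scaffolding precisely and then verifying each of the five terms lands where it should.
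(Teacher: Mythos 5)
Your proposal has the right skeleton—double induction via Recurrence~(\ref{eqn:main-rec}), base cases from Lemmas~\ref{lem:base-case-linear} and~\ref{lem:base-case-nonlinear}, tuned $B$ and $G$—but the crucial parameter choice for $B$ is inverted, and this breaks the induction. You set $B=(a_{i-1,w})^c$ with $w=a_{i,j-1}$, so that $m/B\leq w^c=(a_{i,j-1})^c$, and you then apply the inductive hypothesis to the light-first/-last term $2\Ex(Q_{a,b},n^*,m/B)$ at index $(i,j-1,t)$. But that term has the \emph{same} $i$ and $t$, so its $n^*$-coefficient is $2\mu_{i,t}$—already strictly larger than the budget $\mu_{i,t}$ you are trying to establish, before any of the other four terms contribute. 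The recurrence on the $n$-coefficient becomes $\mu_{i,t}\geq 2\mu_{i,t}+\cdots$, which is unsolvable. The paper makes the opposite choice: $B=(a_{i,j-1})^c$, so that each slab has width $\leq(a_{i,j-1})^c$ and the \emph{local} term recurses at $(i,j-1,t)$ (whose $n$-part multiplies only $n-n^*$, avoiding circularity), while $m/B\leq(a_{i-1,a_{i,j-1}})^c$ and the light-first/-last term recurses at $(i-1,a_{i,j-1},t)$—that is where the drop in $i$ actually lives. Your stated belief that ``iterating the $j$-index down to $1$ is what produces the $2^i$ factor'' is also off: iterating $j$ down gives at most $2^{O(j)}$, and in the relevant range $j$ is small ($j=3$ when $n=m$). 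The $2^i$ comes from the contribution $2\mu_{i-1,t}$ in the $n^*$-budget inequality, which unwinds to $2^{i-1}$ exactly because the $i$-decrement sits on the light-first/-last term. Secondarily, you leave $G$ vague (``$G=B$ or a similar polynomial relation''), whereas the paper needs the specific choice $G=(c(j-1))^{\max\{0,t-5\}}(2k-1)(B-2)$ so that the heavy-middle bound collapses to $2C_k\mu_{i,t-2}(n^*+Gm/B)$ and the light-middle term's $n^*$-coefficient $C_k'B/G$ stays $O(1)$; $G=B$ would not give either of these.
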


\begin{proof}
The proof is by induction on $i,j$, and $t$.
The cases $t\in\{2,3\}$ were already handled, so assume $t\geq 4$.
Let $A$ be a $Q_{a,b}$-free $n\times m$ matrix, 
where $m\le (a_{i,j})^c$.
We apply Eqn.~(\ref{eqn:main-rec}) with $B,G$ set as follows:
\begin{align*}
    B &= a_{i,j-1}^c,\\
    G &= (c(j-1))^{\max\{0,t-5\}}(2k-1)(B-2).
\intertext{Observe that}
m/B &\leq (a_{i,j}/a_{i,j-1})^c = (a_{i-1,a_{i,j-1}})^c.
\end{align*}
We apply the induction hypothesis 
at $(i,j-1,t)$ to local 1s
at $(i,j,t-1)$ to first/last 1s,
at $(i,j-1,t-2)$ to heavy middle 1s, 
and at $(i-1,a_{i,j-1},t)$ to light-first/light-last 1s.
Plugging these bounds into Eqn.~(\ref{eqn:main-rec}) 
and applying Theorem~\ref{thm:perm-doubleperm-constant}
yields 
the following upper bound.
\begin{align}
\lefteqn{\Ex(Q_{a,b},n,m)}\nonumber\\
&\leq \mu_{i,t}(n-n^*) + \mu_{i,t}(c(j-1))^{t-3}(2k-1)(m-2m/B) && \mbox{local}\nonumber\\
&\hcm[.4]+ 2\mu_{i,t-1}n^* + 2\mu_{i,t-1}(cj)^{t-4}(2k-1)(m-2) && \mbox{first/last}\nonumber\\
&\hcm[.4]+ C_k(n^*/G + m/B)\left(\mu_{i,t-2}G + \mu_{i,t-2}(c(j-1))^{\max\{0,t-5\}}(2k-1)(B-2)\right) && \mbox{heavy}\nonumber\\
&\hcm[.4]+ 2\mu_{i-1,t}n^* + 2\mu_{i-1,t}(c(a_{i,j-1}))^{t-3}(2k-1)(m/B-2)\nonumber && \mbox{light-first/last}\\
&\hcm[.4]+ C_k'(Bn^*/G + m) && \mbox{light-middle}\nonumber\\
\intertext{Note that by choice of $G$, the line for heavy 1s
is exactly $2C_k\mu_{i,t-2}(n^* + Gm/B)$. Continuing,}
&\leq \mu_{i,t}(n 
+ (cj)^{t-3}(2k-1)(m-2))\label{eqn:jminus1-to-j}\\
&\hcm[.4]+ \left[-\mu_{i,t} + 2\mu_{i,t-1} + 2C_k\mu_{i,t-2} + 2\mu_{i-1,t} + C_k'\right]n^*\label{eqn:n-star-term}\\
&\hcm[.4]+\Big[-\mu_{i,t}c(cj)^{t-4} +  2\mu_{i,t-1}(cj)^{t-4}+ 2C_k\mu_{i,t-2}(c(j-1))^{\max\{0,t-5\}}\label{eqn:m-term}\\ 
&\hcm[1.2]+ 2\mu_{i-1,t}c^{t-3}a_{i,j-1}^{(t-3)-c} + C_k'\Big](2k-1)(m-2)\nonumber\\
&\leq \mu_{i,t}(n + (cj)^{t-3}(2k-1)(m-2)).\label{eqn:complete-induction}
\end{align}

Lines~(\ref{eqn:jminus1-to-j}--\ref{eqn:m-term}) follow 
from the fact that 
$(c(j-1))^{t-3} \leq (cj)^{t-3}-c(cj)^{t-4}$.
Line~(\ref{eqn:complete-induction}) completes
the induction so long as the bracketed
terms in Lines~(\ref{eqn:n-star-term},\ref{eqn:m-term}) are non-positive.  
These will hold whenever
Eqns.~(\ref{constraint:mu-1},\ref{constraint:mu-2}) hold.
\begin{align}
    \mu_{i,t} &\geq 2\mu_{i,t-1} + 2C_k\mu_{i,t-2} + 2\mu_{i-1,t} + C_k',\label{constraint:mu-1}\\
    \mu_{i,t} &\geq \frac{2\mu_{i,t-1}}{c} + \frac{2C_k\mu_{i,t-2}}{c} 
    + \frac{2\mu_{i-1,t}}{2^{3k-1}}\label{constraint:mu-2} + \frac{C_k'}{2^{t-4}c^{t-3}}.
\end{align}
Eqn.~(\ref{constraint:mu-2}) was 
obtained by dividing 
through by $c(cj)^{t-4}$ and noting that $j\geq 2$ and $a_{i,j-1}\geq 2$.  
Clearly any values 
$(\mu_{i,t})_{i\geq 1,t\geq 0}$ that satisfy Eqn.~(\ref{constraint:mu-1}) also satisfy~(\ref{constraint:mu-2}) so we may focus solely on the former.
We argue that the lemma is satisfied for $\mu_{i,t}$
defined as follows.  
Let $C = C_k' \geq C_k$.
\begin{align}
\mu_{i,t} &= (2C + 3i)^{t-2}(2^i-1).
\end{align}
When $t\in\{2,3\}$
the claim follows from Lemma~\ref{lem:base-case-linear} 
since $\mu_{i,3}\geq 2$ and $\mu_{i,2}\geq 1$.
When $i=1$ and $t\geq 4$, 
$m\leq (a_{1,j})^c = a_{1,cj}=2^{cj}$ 
and the claim follows 
from Lemma~\ref{lem:base-case-nonlinear}
since $\mu_{i,t} \geq 2^{t-2}$.
Now suppose $i\geq 2$, $t\geq 4$.
\begin{align*}
    &2\mu_{i,t-1} + 2C_k \mu_{i,t-2} + 2\mu_{i-1,t} + C_k'\\
    &\leq 2(2C+ 3i)^{t-3}(2^i-1)
    + 2C (2C + 3i)^{t-4}(2^i-1)
    + 2(2C+3(i-1))^{t-2}(2^{i-1}-1) + C\\
    &\leq (2C+3i)^{t-2}(2^i-1)
    \left(\frac{2}{2C+3i} + \frac{2C}{(2C+3i)^2} + 1 - \frac{3}{2C+3i}\right)\\
    &\leq (2C+3i)^{t-2}(2^i-1)
    \left(\frac{2}{2C+3i} + \frac{1}{2C+3i} + 1 - \frac{3}{2C+3i}\right)\\
    &\leq (2C+3i)^{t-2}(2^i-1) \; = \, \mu_{i,t}.
\end{align*}
The first inequality is from the inductive hypothesis
and $C_k\leq C_k'\leq C$.
The second inequality follows from
$(2C+3(i-1))^{t-2} \leq (2C+3i)^{t-2} - 3(2C+3i)^{t-3}$.
This completes the induction.
\end{proof}

\begin{proof}[Proof of Theorem~\ref{thm:main}]
Let $A$ be a $Q$-free $n\times m$ matrix
and $t=c=3k$.
Take $i$ to be minimal such that for $j=\max\{3,\ceil{n/m}^{1/t}\}$, $m\leq (a_{i,j})^c$.
It is tedious, but straightforward, 
to show that $i=\alpha(n,m)\pm O(1)$.
Lemma~\ref{lem:mu} bounds the number of 1s in $A$
by
\begin{align*}
    \mu_{i,t}(n + (cj)^{t-3}(2k-1)m)   
    &= \mu_{i,t}\left(n + 2^{O(k\log k)}n\right) & \text{$(cj)^{t-3} < 2^{O(k\log k)}(n/m)$}\\
    &= n\cdot 2^{O(k\log k)}\cdot (2C + 3i)^{t-2}2^i\\
    &= n\cdot \left(2^{O(k^2)} + O(i)^{3k-2}\right)2^i  & \text{\ \ \ \ $C=2^{O(k)}$; see Theorem~\ref{thm:perm-doubleperm-constant}}\\
    &= n\cdot 2^{O(k^2) + (1+o(1))\alpha(n,m)}. 
\end{align*}
\end{proof}

\section{Lower Bounds on 0--1 Matrices via Sequences}\label{sect:lower-bound}

\paragraph{Blocked Sequences and 0--1 Matrices.}
If $S$ is a sequence, let $|S|$ be its length
and $\|S\|$ the size of its alphabet $|\Sigma(S)|$.
A \emph{block} is a contiguous sequence of distinct
symbols.  If $S$ is understood to be partitioned
into blocks, $\bl{S}$ is the number of blocks.
Regardless of $\Sigma(S)$, we can always write 
$S$ in \emph{canonical form} over the alphabet
$\{1,\ldots,\|S\|\}$, where the symbols are sorted
according to their first appearance in $S$.
If $S$ is in canonical form, its canonical 
matrix $A_S$ is the $\|S\| \times \bl{S}$ symbol-block incidence matrix, i.e., 
$A_S(i,j) = 1$ if symbol $i$ appears in block $j$, and 0 otherwise.  
One cannot quite recover $S$ from $A_S$ since 
$A_S$ does not encode the order of symbols within a block.  
Nonetheless, the transformation is \emph{useful} inasmuch
as subsequences avoided by $S$ often become 
0--1 patterns avoided by $A_S$.

\paragraph{Composition and Shuffling.}
We consider sequences $S$ partitioned into 
\emph{live} and \emph{dead} blocks satisfying extra constraints:
\begin{itemize}
    \item All live blocks have the same length.  Dead blocks have variable lengths, 
    	and the number of dead blocks between consecutive live blocks is also variable.

    \item The first occurrence of every symbol appears in a dead block,
    and dead blocks contain only first occurrences.
    Let $\livebl{S}$ be the number of live blocks in $S$.
\end{itemize}

\paragraph{Composition.}
Suppose $\Utop$ is a sequence in which all live blocks have length $j$ and $\Umid$ is a sequence with $\|\Umid\|=j$.
The \emph{composition} $\Usub = \Utop\compose\Umid$ is obtained by replacing each live block $L$ of $\Utop$ 
with a copy $\Umid(L)$ over the alphabet of $L$,
whereas dead blocks of $\Utop$ are inherited by $\Usub$ verbatim.
In general $\Umid(L)$ can contain both live and dead blocks~\cite{Pettie15-SIDMA}, 
but in our particular construction $\Umid(L)$ contains only live blocks.

\paragraph{Shuffling.}
Now suppose $\Usub$ is a sequence whose live blocks
have length $j$ and $\Ubot$ is a sequence with $\livebl{\Ubot}=j$. 
The \emph{left-shuffle} $\Usub\lshuffle\Ubot$ is obtained as follows.  
Let 
$\Usub = D_0 L_1 D_1 L_2 D_2 \cdots L_k D_k$, 
where $L_i$ is the $i$th live block, 
$D_i$ is zero or more dead blocks,
and $k=\livebl{\Usub}$.  
Let $\Ubot^* = \Ubot^{(1)}\cdots \Ubot^{(k)}$ 
be the concatenation of $k$ copies of 
$\Ubot$ over disjoint alphabets.
The \emph{left-shuffle} is obtained by taking, for all $i$, the block $L_i = (a_1 a_2 \cdots a_{j})$ and inserting $a_\ell$, for $\ell\in[1,j]$, 
at the 
\emph{left end} of the 
$\ell$th live block of $\Ubot^{(i)}$,
then inserting dead blocks $D_i$ between $\Ubot^{(i)}$ and $\Ubot^{(i+1)}$.\footnote{The \emph{right shuffle} 
$\Usub\rshuffle\Ubot$ 
is defined in the same way, except that $a_\ell$ is inserted at the \emph{right end} 
of the $\ell$th live block of $\Ubot^{(i)}$.
We only use left-shuffles but there are cases where it is desirable to use both~\cite{Pettie-GenDS11,Pettie15-SIDMA}.}

\paragraph{Sequence Construction.}

$U(j)$ and $U(i,j)$ are blocked sequences, where square brackets
indicate dead blocks and parentheses indicate live blocks.
$U(i,j)$ is a variation on order-4 Davenport-Schinzel sequences~\cite{ASS89}, adapted specifically to exclude a 
small pattern that arises from $P\Kron\hatpattern$-type patterns.

\begin{align*}
U(j) &= (j\, (j-1)\, \cdots \, 1)(1\, 2\, \cdots \, j)					& \mbox{2 live blocks}\\
U(1,j) &= [1\, 2\, \cdots \, j] (1\, 2\, \cdots \, j)					& \mbox{first block dead, second live}\\
U(i,0) &= (\,)^{2}							& \mbox{2 empty live blocks}\\
U(i,j) &= (\Utop\compose\Umid)\lshuffle \Ubot\\
\mbox{where} &\hcm[1] \Utop = U(i-1, \livebl{U(i,j-1)}),\\
	        &\hcm[1] \Umid = U(\livebl{U(i,j-1)}),\\
\mbox{and}	&\hcm[1] \Ubot = U(i,j-1)
\end{align*}

Let $N(i,j) = \|U(i,j)\|$ be the alphabet size and $L(i,j) = \livebl{U(i,j)}$ be the number of live blocks. $N,L$ obey the following recurrence:
\begin{align*}
L(1,j) &= 1\\
L(i,0) &= 2\\
L(i,j) &= L(i,j-1)\cdot 2\cdot L(i-1,L(i,j-1))\\
N(1,j) &= j\\
N(i,j) &= N(i,j-1) \cdot 2\cdot L(i-1,L(i,j-1)) + N(i-1,L(i,j-1))
\end{align*}

\begin{lemma}\label{lem:U-Ackermann}
Fix any $U=U(i,j)$.
\begin{enumerate}
\item All live blocks in $U$ have length $j$. 
Each symbol appears $2^{i-1}+1$ times in $U$, 
its first occurrence appearing in a dead block, and the remaining $2^{i-1}$ times in live blocks.
\item As a consequence of part 1, $N(i,j) = (j/2^{i-1}) \cdot L(i,j)$. 
\item The number of dead blocks is at most $L(i,j)-1$.
\item If $n=N(i,j)$ is the number of symbols in $U$ and $m<2L(i,j)$ the number of blocks, $i=\alpha(n,m)\pm O(1)$, 
and $|U| = \Theta(n2^{\alpha(n,m)})$.
\end{enumerate}
\end{lemma}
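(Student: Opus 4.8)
\textbf{Plan for Lemma~\ref{lem:U-Ackermann}.}
The statement bundles four facts about the sequence family $U(i,j)$, and each part feeds the next, so the natural approach is a simultaneous induction on $i$ and $j$ tracking parts 1--3, and then deriving part 4 from part 2 together with the recurrences for $N$ and $L$. For the base cases, $U(1,j) = [1\,2\,\cdots\,j](1\,2\,\cdots\,j)$ manifestly has one live block of length $j$, each symbol occurring once in a dead block and once (namely $2^{0}=1$ time) in a live block, with no dead block except the leading one (so at most $L(1,j)-1 = 0$ interior dead blocks, which is what part 3 should be read as counting — I would state it precisely in terms of dead blocks between live blocks). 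The case $U(i,0) = (\,)^2$ is degenerate: two empty live blocks, zero symbols, and it is consistent with every clause.

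For the inductive step I would unwind the definition $U(i,j) = (\Utop \compose \Umid)\lshuffle \Ubot$ with $\Utop = U(i-1,L(i,j-1))$, $\Umid = U(L(i,j-1))$ (the 2-live-block sequence), and $\Ubot = U(i,j-1)$. Part 1 is the crux: I need to show each symbol of $U(i,j)$ occurs $2^{i-1}+1$ times. The composition $\Utop\compose\Umid$ replaces each live block of $\Utop$ (which has length $L(i,j-1)$, matching $\|\Umid\|$) by a copy of $\Umid$; since $\Umid = U(L(i,j-1))$ has $2$ live blocks each of length $L(i,j-1)$ and no dead blocks, composition doubles the number of occurrences of each symbol in live positions and leaves dead blocks of $\Utop$ (where first occurrences live) untouched. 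By induction $\Utop$ has each symbol appearing $2^{i-2}+1$ times (once dead, $2^{i-2}$ live), so $\Utop\compose\Umid$ has each symbol $2^{i-2}\cdot 2 = 2^{i-1}$ times in live blocks plus once in a dead block. The left-shuffle with $\Ubot = U(i,j-1)$ then \emph{inserts} the symbols of $\Utop\compose\Umid$'s live blocks into the live blocks of fresh disjoint copies of $\Ubot$; I must check that (i) this does not change the occurrence count of the inserted symbols, (ii) the symbols of the $\Ubot$-copies retain their count $2^{i-1}+1$ (true since $\Ubot = U(i,j-1)$ is at parameter $i$, not $i-1$), and (iii) the live-block length becomes $j$: each live block of $\Ubot^{(r)}$ had length $j-1$ and receives exactly one inserted symbol, giving length $j$. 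The dead-block bookkeeping for part 3 follows by counting: dead blocks of the shuffle come from the $D_i$ of $\Utop\compose\Umid$ (inherited from $\Utop$, at most $L(i,j-1)-1$ of them) interleaved with $L(i,j-1)$ copies of $\Ubot$, each contributing at most $L(i,j-1)-1$-many... — here I would just push through the arithmetic and confirm it telescopes to $L(i,j)-1$ using $L(i,j) = 2\,L(i,j-1)\,L(i-1,L(i,j-1))$.

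Part 2, $N(i,j) = (j/2^{i-1})L(i,j)$, is then a one-line consequence: summing occurrences over all symbols, $|U(i,j)|$ counted by symbols is $(2^{i-1}+1)N(i,j)$, while counted by blocks it is $\sum(\text{block lengths})$; the live blocks alone contribute $j\cdot L(i,j)$, and each symbol's $2^{i-1}$ live occurrences are spread over live blocks, giving $j\cdot L(i,j) = 2^{i-1}N(i,j)$. (Alternatively, verify $N(i,j) = (j/2^{i-1})L(i,j)$ directly against the $N$- and $L$-recurrences by induction, which is a routine substitution.)

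For part 4 I would proceed as in the proof of Theorem~\ref{thm:main}: first observe $|U| = (2^{i-1}+1)N(i,j) = \Theta(2^{i}N(i,j)) = \Theta(2^i n)$, so it remains to show $i = \alpha(n,m)\pm O(1)$ when $n = N(i,j)$ and $m < 2L(i,j)$. Comparing the recurrence for $L(i,j)$ with the Ackermann recurrence $a_{i,j} = w\cdot a_{i-1,w}$, $w = a_{i,j-1}$, one sees $L(i,j)$ grows at essentially the same rate as $a_{i,j}$ up to polynomial/constant slack (the extra factor $2$ and the base values differ, but these perturb the inverse only by an additive constant); since $m = \Theta(L(i,j))$ and $n = (j/2^{i-1})L(i,j)$ so that $\ceil{n/m} = \Theta(j/2^{i-1})$, plugging into the definition $\alpha(n,m) = \min\{i' : a_{i',j'}\ge m,\ j' = \max\{3,\ceil{n/m}\}\}$ yields $i' = i\pm O(1)$. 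I would phrase this as "tedious but straightforward," paralleling the proof of Theorem~\ref{thm:main}, rather than grinding the constants.

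\textbf{Main obstacle.} The delicate point is part 1, specifically verifying that the left-shuffle operation interacts correctly with occurrence counts and live-block lengths — one must be careful that inserting the $j$-th symbol of a live block of $\Utop\compose\Umid$ into the $j$-th live block of the appropriate copy $\Ubot^{(r)}$ is well-defined (the counts $\livebl{\Ubot} = L(i,j-1)$ must equal the live-block length of $\Utop\compose\Umid$, which equals the live-block length of $\Utop = U(i-1,L(i,j-1))$, namely $L(i,j-1)$ — consistent) and that first occurrences stay in dead blocks (the inserted symbols already had their first occurrence in a dead block of $\Utop$, and insertion into a live block of $\Ubot^{(r)}$ is never a first occurrence). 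Keeping the two alphabets — that of $\Utop\compose\Umid$ and the disjoint copies of $\Ubot$ — cleanly separated throughout the count is where an error would most easily creep in.
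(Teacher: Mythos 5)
Your proposal follows essentially the same route as the paper's proof: simultaneous induction on $(i,j)$ for parts 1--3, with composition doubling the live occurrences of $\Utop$'s symbols and the shuffle adding one symbol per live block of each $\Ubot$ copy, then a double count of live-block contents for part 2 and a block count for part 3, and a deferred ``tedious but straightforward'' inverse-Ackermann computation for part 4 (the paper cites~\cite[Lemma 3.10]{PetInvAck} for a template). The one place where you are more careful than the paper is the base case of part 3: $U(1,j)$ has one dead block while $L(1,j)-1 = 0$, so ``all base cases'' as stated is slightly off; either one reads the count as interior dead blocks (as you suggest), or one observes that feeding the actual count of $1$ dead block into the paper's telescoping identity at $i=2$ still yields exactly $L(2,j)-1$, so the claimed bound holds for all $i\geq 2$ regardless. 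Otherwise your account is a faithful reconstruction, and the step you flag as delicate (keeping the $\Utop\compose\Umid$ alphabet separate from the disjoint $\Ubot$ copies when counting occurrences) is indeed the crux of part 1.
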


\begin{proof}
\underline{Part 1.} The claim holds in the base cases $U(1,j)$ and $U(i,0)$.
All live blocks in $\Ubot=U(i,j-1)$ have length $j-1$ by induction, and each receive one symbol in the shuffling operation $(\Utop\compose\Umid)\lshuffle\Ubot$.  
All symbols in $\Ubot=U(i,j-1)$ appear in $2^{i-1}$ live blocks.
Those in $\Utop=U(i-1,\livebl{U(i,j-1)})$ appear in $2^{i-2}$ live blocks, and therefore in $2^{i-1}$ live blocks in $\Usub=\Utop\compose\Umid$ since $\Umid$ doubles the number of live occurrences.  The property that first occurrences appear in dead blocks is preserved by composition and shuffling.
\underline{Part 2.} Note that both 
$j\cdot L(i,j)$ and $2^{i-1}\cdot N(i,j)$ 
both count the total length of all live blocks.
\underline{Part 3.} The claim holds in all base cases.
By induction, the number of dead blocks in $\Utop$
is at most $L(i-1,L(i,j-1))-1$.
$\Ubot^*$ consists of $2L(i-1,L(i,j-1))$ copies of $\Ubot=U(i,j-1)$,
so $\Ubot^*$ has $2L(i-1,L(i,j-1))(L(i,j-1)-1)$ dead blocks.
In total there are $L(i-1,L(i,j-1))(2L(i,j-1)-1)-1\leq L(i,j)-1$ dead blocks.
\underline{Part 4.} Proving Ackermann-like functions are equivalent inasmuch as their inverses differ by $\pm O(1)$ is tedious, 
but straightforward. See~\cite[Lemma 3.10]{PetInvAck} for an example 
of such a proof.
\end{proof}

\begin{lemma}\label{lem:U-properties}
Let $U=U(i,j)$ be obtained from $\Utop,\Umid,\Ubot$.
Suppose $a<b$ are two symbols in $\Sigma(U)$ appearing in a common live block.
\begin{enumerate}
\item The restriction of $U$ to letters $\{a,b\}$
is of the form $a^* b^* (ab) b^* a^*$.
\item If $a\in \Sigma(\Utop)$, $b\in \Sigma(\Ubot^*)$,
then $a<c$ for \emph{every} symbol $c$ appearing in $b$'s copy of $\Ubot$.
\end{enumerate}
\end{lemma}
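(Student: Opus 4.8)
\textbf{Proof plan for Lemma~\ref{lem:U-properties}.}

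The plan is to prove both parts by induction on $i$ (with an inner induction on $j$), tracking how the two operations---composition $\Usub = \Utop\compose\Umid$ and left-shuffle $U = \Usub\lshuffle\Ubot$---act on the restriction of the sequence to a fixed pair of symbols. First I would dispose of the base cases: in $U(1,j) = [1\,2\,\cdots\,j](1\,2\,\cdots\,j)$ the restriction to any pair $\{a,b\}$ with $a<b$ is exactly $ab\,ab$, which has the form $a^*b^*(ab)b^*a^*$ with all the starred runs empty, and in $U(i,0)$ there is no live block at all, so the hypothesis is vacuous.

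For the inductive step of Part 1, I would case on where $a$ and $b$ ``come from.'' A symbol of $U = U(i,j)$ either originates in $\Utop$ (and is therefore carried through the composition into some live block, then distributed by the shuffle into a single live block of one copy $\Ubot^{(r)}$), or it originates in one of the copies $\Ubot^{(r)}$ of $\Ubot = U(i,j-1)$. The three cases are: (i) both $a,b\in\Sigma(\Ubot^{(r)})$ for the same $r$ --- then the claim follows from the inductive hypothesis applied to $\Ubot = U(i,j-1)$, since the inserted $\Utop$-symbols and the dead blocks $D_r$ do not disturb the pattern on $\{a,b\}$ (they contribute no occurrences of $a$ or $b$); (ii) $a,b$ both originate in $\Utop$ --- since they share a live block of $U$, they shared a live block of $\Usub = \Utop\compose\Umid$, so by Part~1 applied to $\Usub$ (which reduces, via the inductive hypothesis on $\Utop = U(i-1,\cdot)$ together with the structure of $\Umid$, to the same pattern form), and the left-shuffle inserts the $\ell$th letter of each live block at the \emph{left end} of the $\ell$th live block of the appropriate $\Ubot^{(r)}$, preserving the relative order of $a$ and $b$; (iii) the mixed case $a\in\Sigma(\Utop)$, $b\in\Sigma(\Ubot^{(r)})$ (or vice versa), which is exactly the situation Part~2 is about --- here one uses that $a$'s live occurrences in $\Ubot^{(r)}$ are all inserted at left ends, so that in the two-letter restriction $a$ always appears immediately before the block-content of $b$, giving the $a^*b^*(ab)b^*a^*$ shape with a single $(ab)$ adjacency. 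The main obstacle is case (ii)/(iii): one must carefully verify that composition does not create a ``second descent'' in the pairwise pattern (i.e.\ that $\Umid = U(\livebl{U(i,j-1)})$, which is itself of the form $(j\,(j-1)\cdots 1)(1\,2\cdots j)$, only ever contributes the single-valley shape), and that the left-shuffle's left-insertion rule is what prevents an extra $a$ from appearing \emph{between} two occurrences of $b$ inside a $\Ubot$-copy.

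For Part 2, the key point is a monotonicity invariant on the alphabets: I would show by the same induction that whenever $U(i,j)$ is built, every symbol contributed by $\Utop$ (via $\Usub$) is numerically \emph{smaller}, in canonical form, than every symbol of the copy of $\Ubot$ into which it is shuffled. This follows from the canonical-form convention (symbols numbered by first appearance) together with the fact that first occurrences of $\Utop$-symbols precede the first occurrences of the $\Ubot^{(r)}$ they are inserted into --- the dead block carrying $a$'s first occurrence sits to the left of $\Ubot^{(r)}$. Feeding this invariant into case (iii) of Part~1 gives both the pattern form and the inequality $a<c$ for every $c\in\Sigma(b\text{'s copy of }\Ubot)$ simultaneously. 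I expect Part~2 to be the easier half once the canonical-numbering bookkeeping is set up; the genuine work is the pairwise-pattern analysis in Part~1, and in particular ruling out the forbidden ``$abab\cdots$''-type extensions that the construction is specifically engineered to avoid.
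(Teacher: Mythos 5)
Your proposal is organized around the same underlying idea as the paper's proof---case on where $a$ and $b$ ``come from,'' reduce the same-origin cases to the inductive hypothesis, and handle the mixed case ($a\in\Sigma(\Utop)$, $b\in\Sigma(\Ubot^*)$) by tracking the single occurrence of $a$ that is shuffled into $b$'s copy of $\Ubot$ together with the fact that $a$'s first occurrence sits in a dead block before that copy. Case~(i) and case~(iii) in your breakdown are the paper's two (implicit and explicit) cases, and your treatment of case~(iii), including the use of the dead-block first-occurrence invariant for Part~2, matches the paper.

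There is one point worth flagging. Your case~(ii)---both $a,b\in\Sigma(\Utop)$---is \emph{vacuous}, and your proposed argument for it would not work if the case were genuine. If $a$ and $b$ both come from $\Utop$ and share a live block $L_r$ of $\Usub$, the left-shuffle sends them to \emph{distinct} live blocks of $\Ubot^{(r)}$ (the $\ell$th letter of $L_r$ goes to the $\ell$th live block), so they cannot share a live block of $U$; and if they do not share a live block of $\Usub$, their shuffled copies land in disjoint $\Ubot^{(r)}$'s. Either way the hypothesis ``$a,b$ share a live block of $U$'' fails, so this case never arises. Your argument ``the left-shuffle preserves the relative order of $a$ and $b$'' is true but does not preserve the \emph{restricted pattern} when $a,b$ share a live block of $\Usub$ (that block is split apart), so it would not yield the $a^*b^*(ab)b^*a^*$ form; you are saved only by the vacuity. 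A minor slip in the base case: in $U(1,j)$ the restriction to $\{a,b\}$ is $[ab](ab)$, which matches the template with the \emph{initial} $a^*$ and $b^*$ nonempty (and the trailing ones empty), not with all starred runs empty. Neither issue invalidates the proof; the approach is essentially the paper's.
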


\begin{proof}
The claim is true in the base cases $U(1,j)$ and $U(i,0)$.
Consider the moment that $a$ is shuffled into $b$'s live block, where $a\in\Sigma(\Utop)$ and $b\in\Sigma(\Ubot^*)$.
All occurrences of $b$ appear in one copy of $\Ubot$ in $\Ubot^*$, and exactly one occurrence
of $a$ is shuffled into this copy.  It follows that the restriction of $U$ to letters $\{a,b\}$
is of the form $a^* | b^* (ab) b^* | a^*$, where the bars mark the boundary of $b$'s copy of $\Ubot$.
Furthermore, since the first occurrence of $a$ is in a dead block, 
which is inserted between two copies of $\Ubot$ in $\Ubot^*$, 
$a<c$ for \emph{every} $c$ in $b$'s copy of $\Ubot$.
\end{proof}

\begin{lemma}\label{lem:41213}
$U=U(i,j)$ does not contain any subsequences order-isomorphic to  $41213$.
\end{lemma}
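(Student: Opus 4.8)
The plan is to prove, by induction on the construction — i.e.\ on the pair $(i,j)$, with the two-block sequences $U(j)$ treated as additional base objects — that \emph{no} sequence $U(i,j)$ and no sequence $U(j)$ contains a subsequence order-isomorphic to $41213$. Throughout, fix a hypothetical occurrence: positions $p_1<p_2<p_3<p_4<p_5$ of $U$ carrying symbols $z,w,x,w,y$ with $w<x<y<z$, and let $o_\ell$ be the occurrence at $p_\ell$. The base cases are immediate. For $U(1,j)=[1\,2\,\cdots\,j](1\,2\,\cdots\,j)$ and $U(j)=(j\,(j-1)\,\cdots\,1)(1\,2\,\cdots\,j)$ every symbol occurs exactly twice, one occurrence in each of the two blocks, so $o_2$ lies in the first block and $o_4$ in the second; then $o_1$ (an occurrence of $z$) is forced into the first block too, and reading off the within-block order gives $z<w$ in the $U(1,j)$ case, or — after also locating $o_3$ — gives $x<w$ in the $U(j)$ case, contradicting $w<x<y<z$. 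The case $U(i,0)$ is empty.

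For the inductive step write $U=U(i,j)=(\Utop\compose\Umid)\lshuffle\Ubot$ and set $\Usub=\Utop\compose\Umid$, so that $U$ is the concatenation of augmented copies $\Ubot^{(1)},\dots,\Ubot^{(k)}$ of $\Ubot=U(i,j-1)$ — the augmentation shuffling one symbol of $\Usub$ into the left end of each live block of the copy — interleaved with the dead blocks of $\Usub$. Call a symbol \emph{old} if it lies in some $\Sigma(\Ubot^{(t)})$ and \emph{new} if it lies in $\Sigma(\Usub)=\Sigma(\Utop)$; each new symbol is shuffled into any single copy at most once, since it occurs at most once in the corresponding live block of $\Usub$. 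Because $U$ is in canonical form (symbols numbered by first appearance) and $w<z$, the occurrence $o_2$ cannot be the first occurrence of $w$ — otherwise the occurrence $o_1$ of $z$ at $p_1<p_2$ would precede $z$'s own first occurrence. Hence, by Lemma~\ref{lem:U-Ackermann}(1), both $o_2$ and $o_4$ lie in live blocks of $U$, say inside augmented copies $\Ubot^{(t_2)}$ and $\Ubot^{(t_4)}$ with $t_2\le t_4$. Each region — a dead block, or an augmented copy — is an interval of positions, so $o_3$ lies in a region between those of $o_2$ and $o_4$, $o_1$ lies in the region of $\Ubot^{(t_2)}$ or earlier, and $o_5$ in that of $\Ubot^{(t_4)}$ or later.

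The argument then splits on $t_2$ versus $t_4$. If $t_2=t_4=:t$, then $w$ is old (a new symbol cannot occur twice inside one copy), so $o_2,o_3,o_4$ all lie inside the augmented copy $\Ubot^{(t)}$. If, in addition, $o_1$ and $o_5$ also lie in this copy and $z,x,y$ are all old, then deleting the shuffled-in symbols exhibits $41213$ inside $\Ubot^{(t)}\cong U(i,j-1)$, contradicting the inductive hypothesis. Otherwise one of $z,x,y$ is a new symbol shuffled into copy $t$, or $o_1$ or $o_5$ lies in another region; to rule these out I will use Lemma~\ref{lem:U-properties}(2), by which any new symbol that shares a live block of $\Ubot^{(t)}$ with a \emph{smaller} old symbol must be smaller than \emph{every} symbol of $\Ubot^{(t)}$ — and since $z,x,y$ all exceed the old symbol $w\in\Sigma(\Ubot^{(t)})$, none of them can be such a ``small'' new symbol, so any new one among them sits at the left end of its live block and \emph{above} all old symbols of that block. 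Combined with the rigid ``$a^*b^*(ab)b^*a^*$'' shape from Lemma~\ref{lem:U-properties}(1) applied to the block-sharing pairs among $\{w,x,y,z\}$, and with the left-end placement of shuffled-in symbols, the ordering $p_1<\cdots<p_5$ becomes unsatisfiable. If instead $t_2<t_4$, then $w$ must be new (an old symbol occurs only in its own copy); if moreover $z,x,y$ are all new then, since the shuffle preserves the relative order of new-symbol occurrences, we already have $41213$ inside $\Usub=\Utop\compose\Umid$, which descends to the inductive hypothesis on $\Utop=U(i-1,\cdot)$ and $\Umid=U(\cdot)$ using the two-block (``down-up'') structure of $\Umid$; and if some of $z,x,y$ are old, the two occurrences of $w$ sitting at the left ends of live blocks of $\Ubot^{(t_2)}$ and $\Ubot^{(t_4)}$ together with Lemma~\ref{lem:U-properties}(2) (pinning $w$ relative to the old symbols of those copies) and Lemma~\ref{lem:U-properties}(1) (controlling any block-sharing pair) again contradict $p_1<\cdots<p_5$.

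I expect the main obstacle to be exactly this last stretch of case analysis — ruling out the ``mixed'' configurations in which the five occurrences are distributed across several copies of $\Ubot$ while some of them are symbols freshly shuffled in. The bookkeeping must be organized so that every such configuration either descends cleanly to the inductive hypothesis on $U(i,j-1)$, $U(i-1,\cdot)$, or $U(\cdot)$, or contradicts the ``one-peak'' structure of a block-sharing pair (Lemma~\ref{lem:U-properties}(1)) together with the order-respecting, left-end placement of shuffled-in symbols (Lemma~\ref{lem:U-properties}(2)); selecting the right pivot inside the casework — which of the five occurrences, or which copy boundary, to branch on — is where the care is needed.
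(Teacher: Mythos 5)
Your high-level framework — induction on $(i,j)$, locating the two occurrences of the repeated symbol $w$ inside copies of $\Ubot$, and then invoking Lemma~\ref{lem:U-properties} to control how shuffled-in symbols relate to the old alphabet — is sound and close in spirit to what the paper does. But the proof as written has a genuine gap, which you yourself flag in the last paragraph: the ``mixed configuration'' cases (some of $z,x,y$ new and some old, $o_1$ or $o_5$ outside the copy of $\Ubot$ containing $o_2,o_4$, or the two occurrences of $w$ lying in different copies with old symbols scattered across copies) are asserted to contradict $p_1<\cdots<p_5$ without an argument. These are precisely the cases that require care, and nothing in the text carries them out. There is also a slip in the case $t_2=t_4$: you write that a new symbol among $z,x,y$ shuffled into copy $t$ would sit ``above all old symbols of that block,'' but Lemma~\ref{lem:U-properties}(2) gives the opposite inequality — a shuffled-in symbol is \emph{smaller} than every symbol of that copy of $\Ubot$ — which is exactly what produces the contradiction with $z,x,y > w$; as stated the sentence points the wrong way.

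More fundamentally, you miss the paper's single key simplification, which is what makes the case analysis tractable. Since $U$ is in canonical form, any occurrence of $41213$ (with the five positions carrying symbols $z,w,x,w,y$ where $w<x<y<z$) can be converted to an occurrence of $\sigma = 31213$: the first appearance of $y$ precedes the first appearance of $z$, which in turn is at or before $p_1$, so one may replace the $z$ at $p_1$ by the \emph{first occurrence} of $y$ at some $p_0 < p_1$. After this reduction there are only three symbols $1<2<3$ to track, the end occurrences of $3$ are ``free'' to relocate, and the whole analysis collapses. The paper then needs only one auxiliary collapse pattern $\sigma' = 31(12)3$ to handle the composition layer, and a single counting observation — each copy of $\Ubot$ receives at most one shuffled-in occurrence of any $\Utop$-symbol, whereas $\sigma$ needs two $1$s between the two $3$s — to kill the shuffling case. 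Your weaker use of canonical form (only to place $o_2$ in a live block) does not buy this reduction in the number of symbols, so your four-symbol case analysis is strictly larger and, as submitted, unfinished.
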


\begin{proof}
Since $U$ is in canonical form, the existence of $41213$ 
implies the existence
of a subsequence order-isomorphic to
\[
\sigma = 31213
\]
Suppose that $\sigma$ first appears in $U(i,j)=\Usub\lshuffle\Ubot = (\Utop\compose\Umid)\lshuffle\Ubot$.
If $\sigma$ already appears in $\Usub$ but did not appear in $\Utop$,
then $\Utop$ must have contained $\sigma'$.
\begin{align*}
\sigma' &= 31(12)3
\end{align*}
Note that $\{2,3\}$ cannot share a live block in $\Utop$ 
without also including $1$,
and if $\{1,2,3\}$ shared a live block in $\Utop$, 
the restriction of $\Utop$ to $\{1,2,3\}$ would, by Lemma~\ref{lem:U-properties}(1),
be:
\[
1^*2^*3^*(123)3^*2^*1^*
\]
and the restriction of $\Usub$ to $\{1,2,3\}$ would be:
\[
1^*2^*3^*(321)(123)3^*2^*1^*,
\]
which does not contain $\sigma$.
We therefore need to argue that neither 
$\sigma$ nor $\sigma'$ can arise in $U(i,j)$ 
in the \emph{shuffling} operation.

If $\sigma$ or $\sigma'$ arose during shuffling, 
then Lemma~\ref{lem:U-properties} implies that 
for any $a,b\in \{1,2,3\}$ with $a\in \Sigma(\Utop)$ and $b\in\Sigma(\Ubot^*)$, that $a<b$.
It cannot be that $\{1\}$ or $\{1,2\}\subset \Sigma(\Utop)$ 
while $\{2,3\}$ or $\{3\}\subset \Sigma(\Ubot^*)$
since 3's 
copy of $\Sbot$ only receives \emph{one} copy of any symbol during shuffling, but both $\sigma,\sigma'$ have two 1s between the first and last 3.
\end{proof}

\begin{remark}
The distinction between \emph{live} and \emph{dead} blocks
is critical for constructing order-3 ($ababa$-free) Davenport-Schinzel sequences~\cite{HS86,Nivasch10,Pettie-DS-JACM}, 
and generalized DS sequences with length $O(n\poly(\alpha(n)))$~\cite{Pettie-GenDS11,Pettie15-SIDMA}. However, all constructions of DS sequences at 
order-4 and above~\cite{ASS89,Nivasch10,Pettie-DS-JACM}
(having length $\Omega(n2^{\alpha(n)})$)
only use live blocks.  
In Lemma~\ref{lem:41213},
it is very important that \emph{first} occurrences lie exclusively in dead blocks, 
and are never shuffled into the 
middle of a copy of $\Ubot$.
If the first block in $U(1,j)$ were redefined to be live,
then we would see instances 
of $41213$ in $U(i,j)$.  It could be that 
$1\, 2\, 1\, 2\, 1$ appears in a copy of $\Ubot$,
and the first occurrences of $\{3,4\}$ 
lie in a common live block in $\Utop$. 
The restriction of $\Utop$ to $\{3,4\}$ 
contains $(34) 3$.  After shuffling the block $(34)$ into the $\Ubot$ containing $1,2$ we can see $1\,2\,3\,4\, 1\,2\,1\,3$.
Lemma~\ref{lem:U-properties}(2) rules out 
this possibility when first occurrences 
appear in \emph{dead} blocks.
\end{remark}

\begin{theorem}\label{thm:main-lower-bound}
$\Ex(W,n,m) = \Theta(m+n2^{\alpha(n,m)})$, where
\[
W = \scalebox{.7}{$\left(\begin{array}{ccccccc}
\bu &&&&\\
&&&&\bu\\
&&\bu&&\\
&\bu&&\bu&
\end{array}
\right)$}
\]
\end{theorem}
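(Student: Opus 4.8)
The theorem asks for matching bounds $\Ex(W,n,m) = \Theta(m + n2^{\alpha(n,m)})$ for the specific $4\times 5$ pattern $W$. The term $m$ is a trivial floor (an all-ones row together with an arbitrary column contributes $\Theta(m+n)$, or one can just place $m$ ones in the first row, which avoids $W$ since $W$ has ones in four distinct rows); the content is the lower bound $\Ex(W,n,m)=\Omega(m+n2^{\alpha(n,m)})$, since the upper bound $O(n2^{(1+o(1))\alpha(n,m)})$ already follows from Theorem~\ref{thm:main} applied to any $\pi$ with $P_\pi \Kron \hatpattern \supseq W$ — indeed $W$ is exactly $\threeonetwo \Kron \oneonet$-ish, and more to the point, $W$ is contained in $P_\pi\Kron\hatpattern$ whenever $\pi$ contains $(3,1,2)$, so the upper bound is inherited. (One does need to double-check the $o(1)$ in the exponent does not actually swallow the lower bound; the point of Theorem~\ref{thm:W} as stated in the introduction is precisely that it does not, modulo $\poly(\alpha)$ factors — but here the claimed bound is the cleaner $\Theta(n2^{\alpha(n,m)})$, so I will need the upper bound in the sharper form, likely by re-running the recurrence of Section~\ref{sect:upper-bound} for this constant-size pattern where $k=O(1)$ and the $2^{O(k^2)}$ and $(cj)^{3k-3}$ factors are absorbed.)

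For the lower bound, the plan is to use the blocked sequence $U = U(i,j)$ constructed just above the theorem statement and pass to its canonical $0$–$1$ matrix $A_U$. By Lemma~\ref{lem:U-Ackermann}, $A_U$ has $n=N(i,j)$ rows, $m<2L(i,j)$ columns, $|U|=\|A_U\|_1 \cdot$ (up to the within-block reordering, but the incidence matrix has exactly $|U|$ ones since every occurrence of every symbol lies in a distinct block — dead blocks contain only first occurrences, live blocks are blocks of distinct symbols), so $\|A_U\|_1 = |U| = \Theta(n 2^{\alpha(n,m)})$, and $i = \alpha(n,m)\pm O(1)$. So it suffices to prove that $A_U$ is $W$-free. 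The heart of the argument is the combinatorial translation: an occurrence of $W$ in $A_U$ would give five blocks $c_1<\dots<c_5$ and four symbols witnessing the pattern, and I need to show this forces a forbidden subsequence in $U$. Reading $W$: row 4 (bottom) has ones in columns $1,3$; wait — re-reading, with rows numbered bottom-to-top, $W$ has its ones at (row 1, col 1), (row 2, col 5), (row 3, col 3), (row 4, col 2) and (row 4, col 4). So the bottom symbol appears in blocks $1$ and $3$; a higher symbol appears in blocks $2$ and $4$; a yet higher symbol in block $5$; the topmost in ... let me recount against the matrix drawn in the statement. The matrix drawn has four rows; top row: $\bu$ in column 1; second row: $\bu$ in column 5; third row: $\bu$ in column 3; fourth (bottom) row: $\bu$ in columns 2 and 4. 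Translating to symbols ordered by the row index and blocks by column index: there is a symbol $d$ (bottom row) in blocks $2,4$; a symbol $c$ in block $3$ with $c>d$; a symbol $b$ in block $5$ with $b>c$; a symbol $a$ in block $1$ with $a>b$. Reading left to right by block: block 1 has $a$, block 2 has $d$, block 3 has $c$, block 4 has $d$, block 5 has $b$, so the subsequence of *values* is $a,d,c,d,b$ with $a>b>c>d$, which is order-isomorphic to $5\,1\,2\,1\,3$ after relabeling — actually $a\!\mapsto\!5? $ no: rank $a$=largest, $d$=smallest, $c$=second smallest, $b$=third. So $(a,d,c,d,b)\cong(4,1,2,1,3) = 41213$. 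That is exactly the pattern that Lemma~\ref{lem:41213} rules out.

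So the proof proceeds: (1) note $\|A_U\|_1 = |U|$; (2) by Lemma~\ref{lem:U-Ackermann} this is $\Theta(n2^{\alpha(n,m)})$ with the right $n,m,i$ relationship; (3) show a $W$-occurrence in $A_U$ yields, reading symbols in block order, a subsequence of $U$ order-isomorphic to $41213$, using that distinct rows of $A_U$ are distinct symbols and column order is block order — and crucially that because the bottom symbol $d$ occupies two *different* blocks, those are two genuinely different occurrences of $d$ in $U$, so we really do get five terms in the subsequence; (4) invoke Lemma~\ref{lem:41213} for the contradiction; (5) add the trivial $\Omega(m)$ and conclude; (6) the matching upper bound $O(m+n2^{\alpha(n,m)})$ comes from (a sharpened form of) Theorem~\ref{thm:main} since $W \preceq P_\pi\Kron\hatpattern$ for, e.g., $\pi=(3,1,2)$ with $k=3$ a constant.

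\textbf{Main obstacle.} The delicate step is (3): verifying that the natural reading of a $W$-occurrence in $A_U$ really produces a $41213$ (or an equivalent forbidden word), being careful that (i) a $0$ entry of $W$ imposes no constraint, so I am free to read whatever symbol I like in those block positions — I only control the five marked cells; (ii) the two ones of $W$ in the bottom row lie in distinct columns hence distinct blocks, giving two separate occurrences of the same symbol, which is what makes the middle "$1\,2\,1$" work; (iii) the reading I get may be $41213$ or $31213$ or a reflection, so I may need the extra reduction already present in the proof of Lemma~\ref{lem:41213} (it reduces $41213$ to $31213$ and handles both via $\sigma,\sigma'$). A secondary obstacle is making the upper bound genuinely $O(n2^{\alpha(n,m)})$ rather than $O(n2^{(1+o(1))\alpha(n,m)})$: for this fixed small pattern the polynomial-in-$\alpha$ and $2^{O(k^2)}$ overheads in Lemma~\ref{lem:mu} are absolute constants, so re-examining the recurrence (\ref{eqn:main-rec}) with $k=O(1)$ should give a clean $\Ex(W,n,m)=O(n2^{\alpha(n,m)}+m)$; I would either cite this specialization or remark that Theorem~\ref{thm:main}'s proof, with constant $k$, already yields it.
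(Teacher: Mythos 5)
Your lower-bound argument is exactly the paper's: take $U=U(i,j)$, pass to the symbol--block incidence matrix $A_U$, and observe that a copy of $W$ in $A_U$ (rows $=$ symbols bottom-to-top, columns $=$ blocks) reads off as a subsequence $(s_4,s_1,s_2,s_1,s_3)$ with $s_1<s_2<s_3<s_4$, order-isomorphic to $41213$, contradicting Lemma~\ref{lem:41213}. Your translation of $W$ to $41213$ is correct, the observation $\|A_U\|_1=|U|$ is correct (each block is a set of distinct symbols, and dead blocks contain only first occurrences), and the size/length bookkeeping via Lemma~\ref{lem:U-Ackermann} is the right invocation. This part matches the paper.

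The gap is in the upper bound. You propose to deduce $\Ex(W,n,m)=O(m+n2^{\alpha(n,m)})$ by re-running the recurrence of Section~\ref{sect:upper-bound} for ``constant $k$,'' claiming that the ``polynomial-in-$\alpha$ and $2^{O(k^2)}$ overheads in Lemma~\ref{lem:mu} are absolute constants.'' That is not so: Lemma~\ref{lem:mu} gives $\mu_{i,t}=(2C+3i)^{t-2}(2^i-1)$, and with $i=\alpha(n,m)\pm O(1)$ the factor $(2C+3i)^{t-2}$ is $\Theta(\alpha(n,m)^{t-2})$, which grows without bound for any fixed $t>2$. The $2^{O(k^2)}$ term is a constant when $k$ is fixed, but the $\operatorname{poly}(\alpha)$ factor survives, so Theorem~\ref{thm:main}, even specialized to $\pi=(3,1,2)$ (i.e., $t=9$), gives only $\Ex(W,n,m)=O(n\,\alpha(n,m)^{7}\,2^{\alpha(n,m)})$, which does not match the claimed $\Theta(n2^{\alpha(n,m)})$. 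Also note that $W$ itself is not a $Q_{a,b}$-type matrix (it is $4\times 5$ with $5$ ones, not $2k\times(3k-a-b)$), so the recurrence cannot be applied to it directly; one would go through $W\prec P_\pi\Kron\hatpattern$ and inherit the looser bound. The paper does not take this route at all: it obtains the matching upper bound by citing \cite[Thm.~3.4]{Pettie-GenDS11}, a pre-existing result tailored to small patterns of this shape. To complete your proof you would need either to invoke that reference or to give a separate, sharper argument for $W$ specifically; the general machinery of Section~\ref{sect:upper-bound} as written does not deliver the $O(n2^{\alpha(n,m)})$ bound.
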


\begin{proof}
By Lemma~\ref{lem:U-Ackermann}, the sequence $U=U(i,j)$ has $n=N(i,j)$ symbols, 
$m<2L(i,j)$ blocks, and length $|U| = \Theta(n2^i) = \Theta(n2^{\alpha(n,m)})$.
We convert $U$ to an $n\times m$ 0--1 matrix $A_U$.  Number the rows of $A_U$
from bottom-to-top, and the columns from left-to-right, and let 
$A_U(i,j)=1$ iff symbol $i$ appears in block $j$.
$U$ does not contain subsequences order-isomorphic to $41213$,
which implies that $A_U$ is $W$-free, and hence $\Ex(W,n,m)=\Omega(n2^{\alpha(n,m)})$.
The matching upper bound is obtained as in \cite[Thm.~3.4]{Pettie-GenDS11}.
\end{proof}

\section{Additional Upper Bounds}\label{sect:additional-upper-bounds}

\subsection{Proof of Theorem~\ref{thm:identity-product}}

Recall that $I_k$ is the $k\times k$ identity matrix. 
For example, when $k=3$, we have the following.
\begin{align*}
    I_k &= \scalebox{.8}{$\left(\begin{array}{ccc}
    \bu&&\\
    &\bu&\\
    &&\bu
    \end{array}\right)$}
    &
    I_k \Kron \hatpattern &= \scalebox{.8}{$\left(\begin{array}{c@{\hcm[.1]}c@{\hcm[.1]}c@{\hcm[.2]}c@{\hcm[.1]}c@{\hcm[.1]}c@{\hcm[.4]}c@{\hcm[.1]}c@{\hcm[.1]}c}
      &\bu&&&&&&&\\
    \bu&&\bu&&&&&&\\ 
    &&&&\bu&&&&\\
    &&&\bu&&\bu&&&\\
    &&&&&&&\bu&\\
    &&&&&&\bu&&\bu
    \end{array}\right)$}
\end{align*}

Theorem~\ref{thm:identity-product} follows from Keszegh's~\cite{Keszegh09} \emph{joining} operation and
and Pettie's upper bound on order-3 Davenport-Schinzel sequences~\cite{Pettie-DS-JACM}; cf.~\cite{HS86,Nivasch10}.
Keszegh~\cite{Keszegh09} proved that if $R$ has a 1 in its southeast corner and $S$ has a 1 in its northwest corner, that 
$\Ex(R\oplus S, n,m) \leq \Ex(R,n,m)+\Ex(S,n,m)$, 
where $R\oplus S$ is formed by joining $R,S$ at their corners.
\[
R\oplus S = \left(\,\begin{array}{ccccccc}
\cline{1-4}
\multicolumn{1}{|c}{ } &&& \multicolumn{1}{c|}{} \\
\multicolumn{1}{|c}{ } &R&& \multicolumn{1}{c|}{} &&& \\\cline{4-7}
\multicolumn{1}{|c}{ } &&\zero{\hcm[.26]{\bu}}& \multicolumn{1}{|c|}{} &&& \multicolumn{1}{c|}{}\\\cline{1-4}
&&& \multicolumn{1}{|c}{} &&S& \multicolumn{1}{c|}{}\\
&&& \multicolumn{1}{|c}{} &&& \multicolumn{1}{c|}{}\\\cline{4-7}
\end{array}\,\right).
\]
Observe that $I_k\Kron \hatpattern = (I_{k-1}\Kron \hatpattern)\oplus \ababa$, so we can apply Keszegh's operation $k-2$ times to reduce to the base case $I_2\Kron\hatpattern$.  We claim $\Ex(I_2\Kron\hatpattern,n,m) < \Ex(\ababa,n,m) +2n +m$.  
Suppose $A$ is $I_2\Kron\hatpattern$-free, and let 
$A'$ be obtained by removing
the top 1 in each column, and then the first two 1s in each row.
Then $A'$ is clearly $\ababa$-free.  
Putting it all together, we have,
\begin{align*}
\Ex(I_k\Kron\hatpattern,n,m) &\leq \Ex(I_2\Kron\hatpattern,n,m) + (k-2)\Ex(\ababa,n,m) \\
&\leq (k-1)\Ex(\ababa,n,m) + 2n +m\\
&\leq (k-1)2n\alpha(n,m)+O(k(n+m)),
\end{align*}
where the last inequality follows from the bound $\Ex(\ababa,n,m)=2n\alpha(n,m)+O(n+m)$ on order-3 Davenport-Schinzel sequences~\cite{Pettie-DS-JACM}.

\subsection{Avoding $W$ and Its Reflection}

By symmetry, Theorem~\ref{thm:main-lower-bound} also applies
to $\Ex(W',n,m)$, where $W'$ is the reflection of $W$ along the $y$-axis.  However, the density of $\{W,W'\}$-free matrices 
is asymptotically smaller.

\begin{theorem}
$\Ex(\{W,W'\},n,m) \leq 4n\alpha(n,m)+O(n+m)$, 
where $W'$ is the reflection of $W$ along the $y$-axis.
\[
W' = \scalebox{.7}{$\left(\begin{array}{ccccc}
&&&&\bu\\
\bu&&&&\\
&&\bu&&\\
&\bu&&\bu&
\end{array}
\right)$}
\]
\end{theorem}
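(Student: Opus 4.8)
The plan is to reduce the problem to two instances of order-3 Davenport-Schinzel sequences, following the template of the $I_k\Kron\hatpattern$ argument in Theorem~\ref{thm:identity-product} but now exploiting the extra symmetry. First I would translate the forbidden patterns back into sequence language. A matrix $A$ is $\{W,W'\}$-free; think of $A$ as the symbol-block incidence matrix of a sequence $S$ (after discarding blocks of size $\leq 1$, which costs only $O(n+m)$). The pattern $W$ corresponds to the forbidden subsequence $41213$ (as used in Lemma~\ref{lem:41213} and Theorem~\ref{thm:main-lower-bound}), and its $y$-axis reflection $W'$ corresponds to the mirror-image forbidden subsequence, which reads $31214$ (equivalently $b\,a\,b\,a\,c$ with the first symbol $c$ of a new letter appearing last rather than first — i.e.\ the reverse word). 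Avoiding both simultaneously is therefore equivalent, up to the $O(n+m)$ slack from small blocks, to forbidding a short pattern that is in the ``family'' of order-3 DS sequences plus a small perturbation.

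The key step is to peel off two 1s per row and one 1 per column so that what remains is genuinely $ababa$-free. Concretely, given a $\{W,W'\}$-free matrix $A$, let $A'$ be obtained by deleting, in each row, the leftmost and rightmost 1, and in each column, the topmost 1 (or some fixed-size constant collection of extremal 1s chosen to kill both $W$ and $W'$ simultaneously). I would argue that any occurrence of $ababa$ in $A'$ can be extended, using one of the deleted extremal 1s, into an occurrence of $W$ or of $W'$ in $A$ — the extra ``hat apex'' 1 of the $\hatpattern$-like pattern can be supplied either from a column-extremal 1 above or from a row-extremal 1 to the side, and having removed extremal 1s on \emph{both} the left and the right of each row is exactly what lets us close up into $W$ in one orientation and $W'$ in the other. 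Hence $A'$ is $ababa$-free, so $\|A'\|_1 \leq \Ex(\ababa, n, m) = 2n\alpha(n,m) + O(n+m)$ by Pettie's bound~\cite{Pettie-DS-JACM}, and $\|A\|_1 \leq \|A'\|_1 + O(n+m)$. This already gives $2n\alpha(n,m)+O(n+m)$; the factor $4$ in the statement leaves room for a slightly more wasteful but more easily verified peeling (e.g.\ removing two 1s from each end of each row and a bounded number from each column), so I would not worry about squeezing the constant below $4$.

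Alternatively — and this is the cleaner route — I would try to express $\{W,W'\}$-freeness directly via Keszegh's joining operation~\cite{Keszegh09} as in the proof of Theorem~\ref{thm:identity-product}. Observe that $W$ and $W'$ each decompose as a corner-join of two copies of the order-3 pattern $\ababa$ (one in each orientation), so forbidding both is, after the corner-join bookkeeping, no more expensive than $2\Ex(\ababa,n,m)$ plus lower-order terms — which, with Pettie's bound, is $4n\alpha(n,m)+O(n+m)$. The main obstacle I anticipate is making the peeling argument airtight: I must check that after removing a \emph{bounded} number of extremal 1s per row and per column, \emph{every} surviving $ababa$ really does complete to $W$ or $W'$ in the original matrix — in particular that the apex 1 needed lies strictly inside the bounding box in the correct row/column range, and that the two orientations are genuinely covered by the left-vs-right removal. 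The combinatorics of ``which extremal 1 to use for which orientation'' is the delicate part; once that case analysis is pinned down, the rest is just invoking superadditivity and the known sharp bound $\Ex(\ababa,n,m)=2n\alpha(n,m)+O(n+m)$.
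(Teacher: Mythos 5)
Your high-level plan --- remove a constant number of extremal 1s per row/column and then invoke the $\Ex(\ababa,n,m)=2n\alpha(n,m)+O(n+m)$ bound --- shares the spirit of the paper's proof, but the central peeling claim is not correct, and the paper proceeds quite differently. Recall that $W$ decomposes as $\ababareflect$ together with an apex strictly \emph{above and to the left} of it, while $W'$ decomposes as $\ababa$ together with an apex strictly \emph{above and to the right} of it. If $A'$ contains a copy of $\ababa$ with top-left 1 at $(r_3,c_1)$ and bottom-right 1 at $(r_1,c_4)$, completing it to $W'$ requires a 1 at some $(r,c)$ with $r>r_3$ \emph{and} $c>c_4$; completing it to $W$ is impossible with these four points since $W$ needs $\ababareflect$, not $\ababa$. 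None of the extremal 1s you delete is forced to satisfy both inequalities: the topmost 1 of column $c_4$ sits in column $c_4$ (not strictly to the right), and the rightmost 1 of row $r_3$ sits in row $r_3$ (not strictly above). So the implication ``$\ababa$ in $A'$ $\Rightarrow$ $W$ or $W'$ in $A$'' does not follow, and $A'$ being $\ababa$-free is unjustified. This also explains why your sketch appears to give $2n\alpha(n,m)+O(n+m)$, a factor-of-two improvement over the stated theorem: the bound is too strong for the argument as written, which is itself a warning sign.

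The paper's argument does not try to make the residual matrix $\ababa$-free at all. After deleting the topmost 1 in each column (only this, costing $m$), the residual matrix $A'$ additionally avoids a third pattern $W''$ (the ``flattened'' version of $W$ and $W'$ in which the apex shares a row with the top of the three-row part). Then one shows $\|A'\|_1 \le 2\Ex(\ababa,n,m)$ by a double-counting argument: classify each 1 according to whether it is the bottom-right 1 of some copy of $\ababa$ and/or the bottom-left 1 of some copy of $\ababareflect$. If the weight exceeds $2\Ex(\ababa,n,m)$, some 1 is simultaneously both; comparing the heights of the top-left 1 of the $\ababa$ copy and the top-right 1 of the $\ababareflect$ copy yields a copy of $W$, $W'$, or $W''$ in $A'$, a contradiction. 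This is inherently a factor-$2$ argument (hence $4n\alpha(n,m)+O(n+m)$, not $2n\alpha(n,m)$), and it sidesteps the completion problem you run into, because it pairs each surviving 1 with \emph{already-existing} copies of $\ababa$ and $\ababareflect$ in the residual matrix rather than trying to conjure an apex from a deleted extremal 1.

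Your alternative route via Keszegh's corner-joining is also not applicable here: that lemma bounds $\Ex(R\oplus S,n,m)$ for a \emph{single} pattern formed by gluing $R$ and $S$ at a common corner 1, whereas this theorem concerns the extremal function for \emph{simultaneously} avoiding the two distinct patterns $W$ and $W'$. Neither $W$ nor $W'$ is a corner-join of two copies of $\ababa$, and even if one were, a single-pattern bound would not give the gain you need --- individually each of $\Ex(W,n,m)$ and $\Ex(W',n,m)$ is already $\Theta(n2^{\alpha(n,m)})$ by Theorem~\ref{thm:main-lower-bound}, so the near-linear bound genuinely requires exploiting the simultaneity of the two exclusions, which is exactly what the double-counting argument does.
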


\begin{proof}
Let $A$ be a $\{W,W'\}$-free matrix.
Remove the top $1$ in each column, yielding
$A'$.  It follows that $A'$ is $\{W,W',W''\}$-free,
where
\[
W'' = \scalebox{.7}{$\left(\begin{array}{ccccc}
\bu&&&&\bu\\
&&\bu&&\\
&\bu&&\bu&
\end{array}
\right)$}
\]
We prove that 
$\Ex(\{W,W',W''\},n,m)\leq 2\Ex(\ababa,n,m)$.
Call a 1 in $A'$ \emph{bottom-right} if it appears 
as the bottom-right 1 in a copy of $\ababa$,
and \emph{bottom-left} if it appears as the bottom-left 1 in a copy of $\ababareflect$.
If $\|A'\|_1 \geq 2\Ex(\ababa,n,m)+1$ then some
$A'(i,j)=1$ 
must be classified as both bottom-left and bottom-right.
Let $(i_L,j_L)$ and $(i_R,j_R)$ be the positions
of the top-left 1 in a copy of $\ababa$ containing $A'(i,j)$
and top-right 1 in a copy of $\ababareflect$ containing $A'(i,j)$, respectively.  Numbering the rows from bottom to top,
we have a copy of $W''$ if $i_L=i_R$,
a copy of $W'$ if $i_L<i_R$, 
and a copy of
$W$ if $i_L>i_R$.  For example, when $i_L < i_R$,
\[
W'' = \scalebox{.7}{$\left(\begin{array}{ccccccc}
&&&&&&\bu\\
&&&&&&\text{\footnotesize{$(i_R,j_R)$}}\\
&&&&\bu&&\\
\bu&&&&&&\\
\text{\footnotesize{$(i_L,j_L)$}}&&&&&&\\
&&\bu&&&&\\
&\bu&&\bu&&\bu&\\
&&&\text{\footnotesize{$(i,j)$}}&&&
\end{array}
\right)$}
\]
It is known~\cite{Pettie-DS-JACM,Nivasch10} 
that $\Ex(\ababa,n,m)=2n\alpha(n,m)\pm O(n+m)$.
\end{proof}

\section{Concluding Remarks}\label{sect:conclusion}

Demaine et al.\cite{DemaineHIKP09} and Pettie~\cite{Pettie10a} introduced the idea
of representing the behavior of a data 
structure by a 0--1 matrix in which the axes 
correspond to \emph{time} and data structure \emph{elements}.
Applying results on forbidden 0--1 matrices
in this context is very natural,
and has led to sharp or nearly sharp bounds
on certain path compression schemes~\cite{Pettie10a},
structured inputs to binary search trees~\cite{ChalermsookGKMS15,ChalermsookGKMS15b,ChalermsookGJAPY23,Pettie10a},
and heaps~\cite{KozmaS20}.

\medskip 

Although the forbidden 0--1 matrix framework is 
perfectly suited to proving that sorting 
$\pi$-free sequences takes \emph{near-linear} time,
it might be inadequate to establishing 
an optimal $O_k(n)$ time bound.  As we have shown,
any $o(n2^{\alpha(n)})$-time analysis must use
some property \emph{beyond} 
$P_{\pi}\Kron\hatpattern$-freeness.
Here it may be useful to consider different ways
to decompose a 0--1 matrix; see Guillemot and Marx~\cite{GuillemotM14} 
and Chalermsook, Gupta, Jiamjitrak, Acosta, Pareek, and Yinghcareonthawornchai~\cite{ChalermsookGJAPY23}.

\medskip

The literature on forbidden 0--1 matrices is rich~\cite{FurediH92,Tardos05,PachTardos06,Pettie-GenDS11,Pettie-SoCG11,Pettie-FH11,Pettie15-SIDMA,Keszegh09,Geneson09,Fulek09,KorandiTTW19,CibulkaK12,CibulkaK17,Fox13,MarcusT04,Klazar92,PettieT23} 
but there are many outstanding open problems.
In the context of \emph{data structure analysis}, the most
interesting open problems are to characterize the set of \emph{linear}
forbidden patterns---those $P$ with $\Ex(P,n)=O(n)$---and in particular,
to characterize linear \emph{light} patterns.
It is known that there are infinitely many minimal (with respect to $\prec$)
non-linear patterns~\cite{Keszegh09,Geneson09,Pettie-FH11}, 
but there may be other ways to characterize this set in a finite representation.
On the other hand, we know of only two minimally non-linear \emph{light} patterns 
(with respect to $\prec$ and reflections), namely
$\scalebox{.5}{$\left(\begin{array}{cccc}
    \bu&&&\\
    &&\bu&\\
    &\bu&&\bu
    \end{array}\right)$}$
and 
$\scalebox{.5}{$\left(\begin{array}{cccc}
    \bu&&\bu&\\
    &\bu&&\bu
    \end{array}\right)$}$.  
It is quite possible that these are the only sources 
    of non-linearity in light patterns.

\newcommand{\etalchar}[1]{$^{#1}$}

\end{document}